\newtheorem{The}{Theorem}
\newtheorem{Cor}[The]{Corollary}
\newtheorem{Lem}[The]{Lemma}
\newtheorem{Def}[The]{Definition}
\theoremstyle{definition}
\newtheorem{Rem}[The]{Remark}
\newtheorem{Exa}[The]{Example}
\theoremstyle{remark}
\numberwithin{equation}{section}
\numberwithin{The}{section}
\def\IN{\mathbb{N}}
\def\IR{\mathbb{R}}
\begin{document}

\title{A \textcolor{blue}{variance-based} importance index for systems with dependent components}
\author{Antonio Arriaza\footnote{\textcolor{blue}{Correspondence to: Antonio Arriaza, Facultad
		de Ciencias, Universidad de C\'adiz, 11510, Puerto Real, C\'adiz, Spain. E-mail: \textit{antoniojesus.arriaza@uca.es}. Telephone number: $+$34 956012775.}}$^{,a}$, Jorge Navarro$^{b}$, Miguel \'Angel Sordo$^{a}$,\\ Alfonso Su\'arez-Llorens$^{a}.$\\ 
\small{$^{a}$Departamento de Estad\'{i}stica e Investigaci\'{o}n	Operativa, Universidad de C\'{a}diz, Spain.}\\
\small{$^{b}$Departamento de Estad\'{i}stica e Investigaci\'{o}n	Operativa, Universidad de Murcia, Spain}}

\maketitle

\begin{abstract}
This paper proposes \textcolor{blue}{a variance-based} measure of importance for coherent systems with dependent and heterogeneous components. The particular cases of independent components and homogeneous components are also considered. We model the dependence structure among the components by the concept of copula. The proposed measure allows us to provide the best estimation of the system lifetime, in terms of the mean squared error, under the assumption that the lifetime of one of its components is known. We include theoretical results that are useful to calculate a closed-form of our measure and to compare two components of a system. We also provide some procedures to approximate the importance measure by Monte Carlo simulation methods. Finally, we illustrate the main results with several examples.

\vfil
\noindent
\textbf{MSC 2020 Subject Classification}: 62K05 $\cdot$ 60E15 $\cdot$ 90B25\\
\textbf{Key words and phrases}: Importance measures $\cdot$ coherent systems $\cdot$ dependence $\cdot$ copulas $\cdot$ Monte Carlo methods%in progress...
\end{abstract}

\section{Introduction}
\label{S1}

A typical problem in Reliability Theory is to find the reliability function associated with a coherent system. This function is used to compute  expected lifetimes, warranty policies, residual lifetimes and other characteristics. Closely related to this problem, we find in the literature different approaches to improve the reliability of a system, such as redundancy of components or maintenance policies. However, the reliability function is insufficient to discern which component should be redundant or repaired in these improvement strategies. Let us consider  two coherent systems, with four independent and identically distributed \textcolor{blue}{(i.i.d.) components, with lifetimes $T_1 = \max(\min(X_1,X_2),\min(X_2,X_3),\min(X_3,X_4))$ and
$T_2 = \max(\min(X_1,\max(X_2,X_3,X_4)),\min(X_2,X_3,X_4)),$} \noindent where $X_i$ represents the lifetime of the $i$th component. \textcolor{blue}{From Table 2.1 in \cite{N21} and the representation of the system reliability function in terms of Samaniego's signature, see \cite{Samaniego85}, it is straightforward to prove that both systems have the same reliability function \textcolor{blue}{when the components' lifetimes are i.i.d.}} However, the best redundancy allocation (respectively, the optimal maintenance policy) does not always coincide for both systems \textcolor{blue}{since they have different structures.} 

Consequently, importance measures of components rise as a relevant tool to deal with this kind of problem. The aim is to rank a list of components (or groups of components) in terms of their relevance concerning the structure, reliability or lifetime of the system. Hence, this list can be used to design maintenance policies  to prevent early failures of the system. Furthermore, in the event of system failure, operators can follow the checklist to find the most probable cause of failure. 

According to Birnbaum \cite{Birn1969}, importance measures can be classified into three main groups. Firstly, the \textit{structure importance measure class} takes into account only the position of the components in the coherent system. Therefore, these measures are only based on the structure function of the system. This class does not involve the reliability or lifetime of the corresponding component. These types of measures are helpful in the early stages of system design, when the quality of the components are not established yet. Secondly, the \textit{reliability importance measure class}, besides the position of the components, also includes the information of the components' reliability at a fixed time point. Usually, the time considered is known as the mission time of the system. These measures analyze the change in the reliability of the system with respect to the change in the reliability of a specific component. The measures in this class are relevant for disposable systems or systems which must function during a specific period. In any case, if we consider the mission time as a time parameter, any reliability importance measure can be interpreted as a time-dependent importance measure.  Finally, the \textit{lifetime importance measure class} considers the structure of the system and the lifetime distribution of each component. This kind of measures are helpful when the functioning time of the system is indeterminate (frequently when the system has long-term service) and contains two subclasses. On the one hand, the class of time-dependent lifetime importance measures produce more detailed information depending on the time considered. Hence, the corresponding ranking of components' importance may vary over time. On the other hand,  time-independent lifetime importance measures generate more global information about the components' importance.   

Several importance measures have been proposed with different motivations for each class mentioned above. Some examples are Natvig \cite{Natv79}, Bergman \cite{Bergman85}, Norros \cite{Norros86}, Boland and El-Neweihi \cite{Boland95}, Borgonovo \cite{Borgonovo2001}, Kuo and Zuo \cite{KuoZuo2003}, Song et al. \cite{Song2012} and Navarro et al. \cite{NFFA2020}. Specially relevant are the importance measures given by Birnbaum \cite{Birn1969} and Barlow and Proschan \cite{BP75}. Over time, these measures have been generalized in order to deal with more complex systems. 

The Birnbaum measure, usually denoted by B-measure, has different definitions depending on the considered class (structure, reliability or lifetime); see, for example, \cite{Birn1969,BPbook75,Lamb} and \cite{Natv79}. The original definition of the B-lifetime measure only  considers coherent systems with independent components. Therefore, some authors have focused on giving extensions of this measure for the case of systems with dependent components, see \cite{Miziula19} and \cite{Zhang17}. One of the main characteristics of B-importance measures is that they can be used to define many other importance measures. An excellent review of the relationships between the B-measures and other importance measures can be found in \cite{Kuo12}. 

On the other hand,  Barlow and Proschan measures, denoted by BP-measures, can be defined in terms of B-measures from the corresponding structure and lifetime versions, see \cite{Kuo12}. Originally, the BP-lifetime importance was defined for the case of independent components. Iyer \cite{Iyer92} generalized the BP-lifetime importance when the components' lifetimes are jointly absolutely continuous but not necessarily independent.  Twenty years later, Marichal and Mathonet \cite{Marichal12} provided a more general extension, in terms of discrete derivatives of the structure function, with the only assumption that the joint distribution of component lifetimes has no ties. 

Some other groups of importance measures can be classified as measures of importance based on the path sets or cut sets (or both) of a coherent system. For the definitions of path and cut sets, see \cite{BPbook75}. Some examples of this type of measures are found in Fussell and Vesely \cite{Fussell72}, Butler \cite{Butler1979} and Hwang \cite{Hwang2001}. A relevant case of this kind of measures was provided by Boland et al. \cite{Boland89}. They defined the permutation importance order which was reformulated later by Meng \cite{Meng1994} and Koutras et al. \cite{Koutras1994} in terms of minimal cut set and minimal path set of a coherent system. This order provides a partial ranking of the components. Specifically, it satisfies the transitivity property and it is preserved if we replace the original structure function by the corresponding dual structure function. Given $\Phi$ the structure function of a coherent system, the associated dual coherent system  has the following structure function $\Phi^{D}(\mathbf{x})=1-\Phi(\mathbf{1}-\mathbf{x})$. Some importance measures provide, for each component, the same value for both the original system and its dual. This is the case, for example, of the B-structure measure, B-reliability measure and the BP-structure measure. Other measures, such as Fussel and Vesely \cite{Fussell72}, satisfy that the version based on path sets applied to the original system coincides with the cut sets version for the corresponding dual system for each component.

\textcolor{blue}{In this paper, we use a variance-based approach to assess the relevance of the components in a coherent system.  Based on the variance decomposition formula (or law of total variance), we use the classical coefficient of determination to measure the importance of a given component in terms of the  variance of the system that is due to that  component. Importance assessment methods based on variance decomposition have been studied extensively in many different contexts. To mention only two examples, this method has already been employed in uncertainty theory to make global sensitivity analysis of model outputs (see Borgonovo \cite{Borgonovo2007} and Iooss and Lema\^itre \cite{Ioos2014}, for a review) and in dependence theory, to study directional dependence in a copula setting (see Sungur \cite{Sungur} and Shih and Emura \cite{ShihEmura2021}). However, to the best of our knowledge, this approach is novel to the literature on reliability of coherent systems, where the system depends on the components through a structure function that needs to be studied in detail. As we will show below, the coefficient of determination offers several advantages with respect to other measures of importance in this framework.} On the one side, most of the importance measures proposed in the literature do not inform about the lifetime of the system. In the case of the proposed measure, we provide the best estimation of the system lifetime, in terms of the mean squared error, under the assumption that the lifetime of one of its components is known. This importance measure belongs to the lifetime importance measures class. Furthermore, this measure can be computed for coherent systems with both independent or dependent components. We will model that dependence by using the concept of copula function, see \cite{DS16, mull0, Ne06} and \cite{S03} for definitions and main results concerning to copula theory. Applications to coherent systems can be seen, for example, in \cite{N21, Navarroetal2021} and the references therein. On the other side, in order to be more informative, many importance measures require complex computations or estimators. We will see along the paper that our measure can be approximated by using well known estimators which produce accurate results. 

The rest of the present paper is organized as follows. In Section 2, some basic definitions and previous results are given. In Section 3, we introduce the intuitive idea in which our measure is based on. We also show how to calculate the regression curves by several examples and theoretical results. Some conditions which imply the ordering of the components, in terms of the proposed measure, are given in Section 4. In Section 5, we provide some procedures to \textcolor{blue}{approximate} the measure of importance, by using Monte Carlo methods. \textcolor{blue}{Furthermore, we illustrate the applicability of the importance measure by studying a system in the context of naval engineering.} Finally, we place the conclusions in Section 6.  

\section{Notation and preliminary results}

Let us consider a coherent system with lifetime $T$ and component lifetimes $X_1,\dots,X_n$.  Then, it is well known (see e.g., \cite{BPbook75}, p.\ 12) that 
$$T=\phi(X_1,\dots, X_n)=\max_{i=1,\dots,r} \min_{j\in P_i} X_j,$$
where $P_1,\dots, P_r\subseteq \{1,\dots,n\}$ are the minimal path sets of the system. A set $P\subseteq\{1,\dots,n\}$ is a {\it path set} if the system works when all the components in the set $P$ work. A path set is a {\it minimal path set} if does not contain other path sets. 

We will use the following notation for the lifetime of the  series system with components in a set $P$,  $X_P:=\min_{j\in P} X_j$. Then, its reliability (or survival)  function is
\begin{equation}
\label{relseriesys}
\bar F_P(t):=\Pr(X_P>t)=\Pr(\cap_{j\in P} \{X_j>t\})    
\end{equation}
and the system reliability function $\bar F_T(t):=\Pr(T>t)$ can be obtained as
\begin{equation}
\label{relsys0}
\bar F_T(t)=\sum_{i=1}^r \bar F_{P_i}(t) - \sum_{i=1}^{r-1} \sum_{j=i+1}^{r}  \bar F_{P_i\cup P_j}(t)+\dots (-1)^{r+1}\bar F_{P_1\cup \dots \cup P_r}(t)
\end{equation}
for all $t\geq 0$ (see, e.g., \cite{N21}, p.\ 37). This expression is called {\it minimal path set representation}.

We assume that the components can be dependent and that this dependence structure is represented by a {\it survival copula} $\widehat C$ which can be used to represent their joint reliability function as 
\begin{equation}\label{SC} 
\Pr(X_1>x_1,\dots, X_n>x_n)=\widehat C( \bar F_1(x_1),\dots, \bar F_n(x_n))
\end{equation}
(see e.g. \cite{Ne06}, p. 32, or \cite{DS16}, p. 33), where $\bar F_i(t)=\Pr(X_i>t)$ is the reliability function of the $i$th component  for $i=1,\dots,n$. The independence case is represented by the product copula $\widehat C(u_1,\dots,u_n)=u_1 \cdots u_n$.

We shall use the following additional notation. 
For $\mathbf{u}=(u_1,\dots,u_n)$ and $P\subseteq\{1,\dots,n\}$, $\mathbf{u}_P:=(u^P_1,\dots,u^P_n)$ with $u^P_i=u_i$ for $i\in P$ and $u^P_i=1$ for $i\notin P$. For example, if $\mathbf{u}=(u_1,u_2,u_3)$, then 
$\mathbf{u}_{\{1,3\}}=(u_1,1,u_3)$. Then, for a given $n$-dimensional copula $C$,  we define 
$C_P(\mathbf{u}):=C(\mathbf{u}_P)$. Note that this is the copula for the marginal distribution of the random vector formed by the variables with the indices included in $P$.  Thus, in the case of dependent components, the reliability function in \eqref{relseriesys} can we rewritten as 
$$\bar F_P(t)= \widehat C_P (\bar F_1(t),\dots,\bar F_n(t))$$
for all $t\geq 0$. By using this expression in the minimal path set representation obtained in \eqref{relsys0}, we can write the system reliability function as
$$\bar F_T(t)=\bar Q (\bar F_1(t),\dots,\bar F_n(t)),
$$
that is, as a distortion of the component reliability functions, where the distortion function 
$$\bar Q (\mathbf{u})= \sum_{i=1}^r \widehat C_{P_i}(\mathbf{u}) - \sum_{i=1}^{r-1} \sum_{j=i+1}^{r}  \widehat C_{P_i\cup P_j}(\mathbf{u})+\dots +(-1)^{r+1}\widehat C_{P_1\cup \dots \cup P_r}(\mathbf{u})$$
only depends on the system structure (the minimal path sets) and the dependence structure (the survival copula).

\textcolor{blue}{Analogously,  we define the function $\hat{C}_{i,P}$ as  
$$\hat{C}_{i,P}(\mathbf{u})=\partial_i \hat{C}(\mathbf{u}_{P\cup \{i\}}),$$
where $\partial_i \hat{C}$ is  the partial derivative of $\hat{C}$ with respect to its $i$th variable. Similarly, $\partial_{i,j} \hat{C}:=\partial_i \partial_j \hat{C}$ and so on. It is well known that the partial derivatives of a copula are related with conditional distributions, see, for example, \cite{DS16} (p.\ 91), \cite{Ne06}, (p.\ 217), \cite{S03} (pp.\ 16-22) and \cite{Ne99}, (p.\ 175).} 

The following lemmas will be useful to compute the expectation of a system when we know the failure time of a component.  

\begin{Lem}\label{l1}
Let us assume that $(X_1,\dots,X_n)$ has a joint absolutely continuous distribution and survival copula $\widehat C$. Let $P\subseteq\{1,\dots,n\}$, $X_P=\min_{j\in P}X_j$ and $i\in\{1,\dots,n\}$. If $i\in P$, $P\neq \{i\}$,  $f_i(x)=-\bar F'_i(x)>0$ and 
\begin{equation}\label{c1}
\lim_{u_j\to 0^+} \partial_{i}\widehat C(u_1,\dots,u_n)=0 
\end{equation} 
holds for all $j\in P-\{i\}$, then
$$\Pr(X_P>t|X_i=x)=\left\{\begin{array}{ccc}
\widehat{C}_{i,P}(\bar F_1(y_1),\dots,\bar F_n(y_n) )& &0\leq t< x,\\
%&&\\
0											&& x\leq t,\\
\end{array}\right.
$$
where $y_i=x$ and  $y_j=t$  for $j\neq i$. Moreover,
$$E(X_P|X_i=x)=\int_0^x \widehat C_{i,P}(\bar F_1(y_1),\dots,\bar F_n(y_n) ) dt.$$
\end{Lem}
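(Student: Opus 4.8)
The plan is to compute the conditional survival function $\Pr(X_P > t \mid X_i = x)$ directly from the joint distribution, exploiting that the event $\{X_P > t\}$ is the event $\{X_j > t \text{ for all } j \in P\}$, and that conditioning on $X_i = x$ is a conditioning on a single continuous coordinate. First I would dispose of the trivial range: since $i \in P$, we have $X_P \le X_i$, so on the event $\{X_i = x\}$ we automatically have $X_P \le x$; hence $\Pr(X_P > t \mid X_i = x) = 0$ whenever $t \ge x$. This handles the second branch and reduces everything to the case $0 \le t < x$.

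For the main case $0 \le t < x$, I would write the conditional survival function as a ratio of a density-weighted survival probability to the marginal density $f_i(x)$. Concretely, using that the conditional distribution given $X_i = x$ can be obtained by differentiating the joint survival function in the $i$th argument,
$$\Pr(X_P > t \mid X_i = x) = \frac{1}{f_i(x)}\left(-\frac{\partial}{\partial x}\Pr\Big(X_i > x,\ X_j > t \text{ for } j \in P \setminus \{i\}\Big)\right).$$
Now I would substitute the survival-copula representation \eqref{SC}: the probability inside is $\widehat C$ evaluated at the point whose $i$th coordinate is $\bar F_i(x)$, whose $j$th coordinate is $\bar F_j(t)$ for $j \in P \setminus\{i\}$, and whose remaining coordinates are $1$ — that is, $\widehat C(\mathbf{u}_{P})$ with $u_i = \bar F_i(x)$, $u_j = \bar F_j(t)$. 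Differentiating in $x$ by the chain rule produces $\partial_i \widehat C$ at that point times $\bar F_i'(x) = -f_i(x)$, and the $f_i(x)$ cancels. This yields exactly $\partial_i \widehat C(\mathbf{y})$ with the stated arguments $y_i = x$, $y_j = t$ — i.e. $\widehat C_{i,P}(\bar F_1(y_1),\dots,\bar F_n(y_n))$ in the paper's notation. The formula for $E(X_P \mid X_i = x)$ then follows by integrating the conditional survival function over $t \in [0,\infty)$, the integrand vanishing for $t \ge x$, so $E(X_P \mid X_i = x) = \int_0^x \widehat C_{i,P}(\dots)\,dt$.

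The main obstacle is making the differentiation-under-the-conditioning step rigorous rather than formal — i.e. justifying that $\Pr(X_P > t \mid X_i = x)$ really equals that normalized partial derivative for (Lebesgue-a.e., hence by the stated continuity assumptions every) $x$. This is where absolute continuity of the joint distribution and the limit hypothesis \eqref{c1} enter: absolute continuity guarantees the relevant densities exist and the partial derivatives of $\widehat C$ are themselves obtained as genuine conditional probabilities (as the cited copula references establish), while condition \eqref{c1} — that $\partial_i \widehat C \to 0$ as any coordinate $u_j$ with $j \in P \setminus \{i\}$ tends to $0$ — is exactly what guarantees the continuity of $\Pr(X_P > t \mid X_i = x)$ at the boundary $t = x$, so that the two branches glue together consistently and there is no atom issue at $t = x$ (note $\bar F_j(t) \to \bar F_j(x^-)$ is not $0$ in general, so the role of \eqref{c1} is a boundary-matching condition ensuring the one-sided limit as $t \uparrow x$ of the first branch agrees with the value $0$ of the second branch, which also forces $\lim_{t\uparrow x}\widehat C_{i,P}(\dots)$ to behave correctly). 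Once this boundary behavior and the a.e. identification are settled, the expectation formula is an immediate application of the layer-cake (Fubini–Tonelli) identity $E(X_P \mid X_i = x) = \int_0^\infty \Pr(X_P > t \mid X_i = x)\,dt$.
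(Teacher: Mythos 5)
Your computational route is essentially the paper's: you dispose of $t\ge x$ via $X_P\le X_i$, then for $0\le t<x$ replace $X_P$ by $Y_i=\min_{j\in P\setminus\{i\}}X_j$, represent $\bar G(x,t)=\Pr(X_i>x,\,X_j>t \text{ for } j\in P\setminus\{i\})$ through the survival copula, use $\partial_x \bar G(x,t)=-f_i(x)\,\widehat C_{i,P}(\bar F_1(y_1),\dots,\bar F_n(y_n))$, and finish with the layer-cake identity for $E(X_P\mid X_i=x)$. That skeleton coincides with the paper's proof, whose key identity is exactly $\partial_1\bar G=-f_i\,\widehat C_{i,P}$.

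The genuine flaw is your account of where hypothesis \eqref{c1} enters, which is the very step you single out as ``the main obstacle.'' You claim \eqref{c1} guarantees continuity of $\Pr(X_P>t\mid X_i=x)$ at $t=x$, so that the two branches ``glue together'' with ``no atom issue at $t=x$.'' This is false: the conditional law of $X_P$ given $X_i=x$ does, in general, have an atom at $x$, of mass $\Pr(X_P=x\mid X_i=x)=\widehat C_{i,P}(\bar F_1(x),\dots,\bar F_n(x))$ (typically strictly positive, e.g.\ $\prod_{j\in P\setminus\{i\}}\bar F_j(x)$ under independence), so the survival function jumps to $0$ at $t=x$; the paper states this explicitly in the remark right after the lemma. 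Moreover, as you yourself note, when $t\uparrow x$ the arguments $\bar F_j(t)$ do not tend to $0$, so \eqref{c1} says nothing about that limit. Its actual role is at the opposite end: in the paper's argument the conditional survival function is recovered by integrating the conditional density, and the evaluation $\int_t^\infty \partial_{1,2}\bar G(x,y)\,dy=\bigl[\partial_{1}\bar G(x,y)\bigr]_{y=t}^{\infty}=-\partial_1\bar G(x,t)$ needs the boundary term at $y=\infty$ to vanish; since $\bar F_j(y)\to 0$ for $j\in P\setminus\{i\}$ as $y\to\infty$, that vanishing is precisely condition \eqref{c1} (equivalently, it ensures $\Pr(Y_i>y\mid X_i=x)\to 0$ as $y\to\infty$, i.e.\ that the candidate expression is a genuine survival function with no mass escaping to infinity). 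In your version, where the identity $\Pr(X_P>t\mid X_i=x)=-\partial_x\bar G(x,t)/f_i(x)$ is asserted directly, this is exactly the justification left open, and by assigning \eqref{c1} the wrong job your proposal never actually supplies it.
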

\begin{proof}
Clearly, $\Pr(X_P>t|X_i=x)=0$ if $t\geq x$ since $X_P\leq x$ when $X_i=x$ and $i\in P$. For $0\leq t\leq x$, we have
$$\Pr(X_P>t|X_i=x)=\Pr(Y_i>t|X_i=x),$$
where $Y_i:=X_{P-\{i\}}=\min_{j\in P-\{i\}}X_j$. The absolutely continuous joint reliability function $\bar G$ of $(X_i, Y_i)$ is 
$$\bar G(x,y):=\Pr(X_i>x,Y_i>y)=\widehat C(\bar F_1(z_1),\dots,\bar F_n(z_n)),$$
where $z_i=x$,
$z_j=y$ if $j\in P$, and $z_j=-\infty$ if $j\notin P\cup \{i\}$. 
 Hence
$$\partial_1 \bar G(x,y)=-f_i(x)\widehat C_{i,P}(\bar F_1(z_1),\dots,\bar F_n(z_n)).$$
The joint probability density function  (pdf) $g$ of $(X_i,Y_i)$ is 
$g(x,y)= \partial_{1,2} \bar G(x,y)$
and the pdf of $Y_i|X_i=x$ is
$$g_{2|1}(y|x)=\frac{ \partial_{1,2} \bar G(x,y)}{f_i(x)}$$
for $y\geq 0$ (zero elsewhere). Then its reliability function $\bar G_{2|1}(t|x):=\Pr(Y_i>t|X_i=x)$ is
$$\bar G_{2|1}(t|x)=\int_t^\infty \frac{ \partial_{1,2} \bar G(x,y)}{f_i(x)}dy=\left[\frac{ \partial_{1} \bar G(x,y)}{f_i(x)}\right]_{y=t}^\infty=-\frac{ \partial_{1} \bar G(x,t)}{f_i(x)}$$
since \eqref{c1} 
holds for $j\in P-\{i\}$. Finally, we use the above expression for $\partial_{1} \bar G$ to get the stated result when $f_{i}(x)> 0$.
\end{proof}

\quad

Note that the distribution of $\{X_P|X_i=x\}$ is a mixture of an absolutely continuous distribution and  a discrete atom distribution with mass at $x$. Also note that if $P=\{i\}$, then $\Pr(X_P>t|X_i=x)=1$ for $0\leq t<x$ and $\Pr(X_P>t|X_i=x)=0$ for $t\geq x$. Moreover, in this case $E(X_P|X_i=x)=x$ for all $x$ such that $f_i(x)>0$.

Next we state the result for $i\notin P$. This result   could also be obtained from the analogous result of Theorem 3.4.1 in \cite{DS16} or  Corollary 2.24 in \cite{S03} for the respective survival functions.

\begin{Lem}\label{l2}
	Let us assume that $(X_1,\dots,X_n)$ has a joint absolutely continuous distribution. Let $P\subseteq\{1,\dots,n\}$ and $i\in\{1,\dots,n\}$. If $i\notin P$, $f_i(x)>0$ and \eqref{c1} holds for $j\in P$, then
	$$\Pr(X_P>t|X_i=x)=\widehat C_{i,P}(\bar F_1(y_1),\dots,\bar F_n(y_n) )$$
	for $ t\geq 0$, where $y_i=x$ and  $y_j=t$ for $j\neq i$. Moreover
	$$E(X_P|X_i=x)=\int_0^\infty \widehat C_{i,P}(\bar F_1(y_1),\dots,\bar F_n(y_n) )dt.$$
\end{Lem}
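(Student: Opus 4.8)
The plan is to mirror the argument used for Lemma~\ref{l1}, the essential difference being that now $i\notin P$, so $X_P$ does not involve $X_i$; consequently the conditional distribution of $\{X_P\mid X_i=x\}$ carries no atom and is supported on all of $[0,\infty)$, which is why the answer has no truncation at $t=x$. First I would set $Y_i:=X_P=\min_{j\in P}X_j$ and note that, since $(X_1,\dots,X_n)$ is jointly absolutely continuous, so is the pair $(X_i,Y_i)$; its joint reliability function is
$$\bar G(x,y):=\Pr(X_i>x,\,Y_i>y)=\widehat C(\bar F_1(z_1),\dots,\bar F_n(z_n)),$$
with $z_i=x$, $z_j=y$ for $j\in P$, and $z_j=-\infty$ (hence $\bar F_j(z_j)=1$) for $j\notin P\cup\{i\}$.

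Next I would differentiate with respect to the first argument to obtain $\partial_1\bar G(x,y)=-f_i(x)\,\widehat C_{i,P}(\bar F_1(z_1),\dots,\bar F_n(z_n))$, so that the conditional density of $Y_i$ given $X_i=x$ is $g_{2\mid1}(y\mid x)=\partial_{1,2}\bar G(x,y)/f_i(x)$ for $y\ge 0$. Integrating this density over $(t,\infty)$ gives
$$\Pr(X_P>t\mid X_i=x)=\int_t^\infty\frac{\partial_{1,2}\bar G(x,y)}{f_i(x)}\,dy=\left[\frac{\partial_1\bar G(x,y)}{f_i(x)}\right]_{y=t}^{\infty}=-\frac{\partial_1\bar G(x,t)}{f_i(x)}$$
for all $t\ge 0$ when $f_i(x)>0$. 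Substituting the expression for $\partial_1\bar G$ then yields the claimed formula. Finally, for the conditional expectation I would use that $X_P\ge 0$ and integrate the survival function, $E(X_P\mid X_i=x)=\int_0^\infty\Pr(X_P>t\mid X_i=x)\,dt$, which is now an integral over the whole half-line (contrast with Lemma~\ref{l1}, where it was over $[0,x]$).

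The step that needs the most care is the vanishing of the boundary term at $y=\infty$ in the displayed computation. As $y\to\infty$ the arguments of $\widehat C_{i,P}$ indexed by $j\in P$ all tend to $0^{+}$ (because $\bar F_j(y)\to 0$), so hypothesis~\eqref{c1}, assumed here precisely for $j\in P$, forces $\partial_1\bar G(x,y)\to 0$ and legitimises the evaluation of the antiderivative at $\infty$. One also needs the usual regularity allowing $\partial_{1,2}\bar G(x,y)$ to serve as a genuine joint density and $g_{2\mid1}$ as a genuine conditional density; this is inherited from the joint absolute continuity of $(X_1,\dots,X_n)$ exactly as in the proof of Lemma~\ref{l1}. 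As already noted in the text, an alternative route is to deduce the conditional survival function directly from Theorem~3.4.1 in \cite{DS16} or Corollary~2.24 in \cite{S03} applied to the pair $(X_i,Y_i)$.
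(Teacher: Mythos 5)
Your proof is correct and follows essentially the same route as the paper: the same joint reliability function $\bar G$ of $(X_i,X_P)$, the same differentiation and conditional density argument, and the same use of \eqref{c1} for $j\in P$ to kill the boundary term at $y=\infty$ before integrating the survival function over $[0,\infty)$. Your added justification of the vanishing boundary term and the pointer to the alternative derivation via \cite{DS16} or \cite{S03} match what the paper states or leaves implicit.
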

\begin{proof}
	The absolutely continuous joint reliability function $\bar G$ of $(X_i, X_P)$ is 
	$$\bar G(x,y):=\Pr(X_i>x,X_P>y)=\widehat C(\bar F_1(z_1),\dots,\bar F_n(z_n)),$$
	where $z_i=x$, $z_j=y$ if $j\in P$, and $z_j=-\infty$ if $j\notin P\cup \{i\}$. Hence
	$$\partial_1 \bar G(x,y)=-f_i(x)\widehat C_{i,P}(\bar F_1(z_1),\dots,\bar F_n(z_n)).$$
	The joint pdf $g$ of $(X_i,X_P)$ is 
	$g(x,y)= \partial_{1,2} \bar G(x,y)$
	and the conditional pdf of $X_P|X_i=x$ is
	$$g_{2|1}(y|x)=\frac{ \partial_{1,2} \bar G(x,y)}{f_i(x)}$$
	for $ y\geq 0$. Then 
	the reliability function $\bar G_{2|1}(t|x):=\Pr(X_P>t|X_i=x)$ can be obtained as
	$$\bar G_{2|1}(t|x)=\int_t^\infty \frac{ \partial_{1,2} \bar G(x,y)}{f_i(x)}dy=\left[\frac{ \partial_{1} \bar G(x,y)}{f_i(x)}\right]_{y=t}^\infty=-\frac{ \partial_{1} \bar G(x,t)}{f_i(x)}$$
for $t\geq 0$,	whenever \eqref{c1} holds for $j\in P$. Finally, we use the above expression for $\partial_{1} \bar G$ to get the stated result.
\end{proof}

\quad

These two lemmas can be used jointly with the minimal path set representation to get the following representation for the conditional distribution of a system when we know the failure time of a component. This result is of independent interest and can be used to predict the system failure time from a component failure time.

\begin{The}\label{th1}
Let us assume that the component lifetimes of a system $(X_1,\dots,X_n)$ have a joint absolutely continuous distribution and that  $f_k(x)>0$  for an $x\geq 0$ and  a $k\in \{1,\dots,n\}$.  Let $T$ be the system lifetime and let $P_1,\dots, P_r$ be its minimal path sets. Then
\begin{align*}
	\Pr(T>t|X_k=x)&=\sum_{i=1}^r \Pr(X_{P_i}>t|X_k=x) - \sum_{i=1}^{r-1} \sum_{j=i+1}^{r}  \Pr(X_{P_i\cup P_j}>t|X_k=x)\\
	& \quad +\dots (-1)^{r+1}\Pr(X_{P_1\cup \dots \cup P_r}>t|X_k=x)
\end{align*}
for all $t\geq 0$.  %Moreover $E(T|X_k=x)=\int_0^\infty \Pr(T>t|X_k=x) dt.$
\end{The}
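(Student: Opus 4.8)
The plan is to reduce the statement to the elementary inclusion--exclusion identity for indicator functions and then take conditional expectations given $X_k=x$.

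First I would record two deterministic facts. By the minimal path set representation, $T=\phi(X_1,\dots,X_n)=\max_{i=1,\dots,r}\min_{j\in P_i}X_j=\max_{i=1,\dots,r}X_{P_i}$, so for every $t\ge 0$ the event $\{T>t\}$ coincides with the union $\bigcup_{i=1}^r\{X_{P_i}>t\}$. Moreover, for any nonempty $S\subseteq\{1,\dots,r\}$,
$$\bigcap_{i\in S}\{X_{P_i}>t\}=\Big\{\min_{i\in S}\min_{j\in P_i}X_j>t\Big\}=\Big\{\min_{j\in\bigcup_{i\in S}P_i}X_j>t\Big\}=\{X_{\cup_{i\in S}P_i}>t\},$$
because the minimum over a union of index sets equals the minimum of the corresponding minima. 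Hence the pointwise inclusion--exclusion identity for indicators reads
$$\mathbf 1_{\{T>t\}}=\sum_{\emptyset\neq S\subseteq\{1,\dots,r\}}(-1)^{|S|+1}\,\mathbf 1_{\{X_{\cup_{i\in S}P_i}>t\}}.$$

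Next I would take the conditional expectation $E(\,\cdot\mid X_k=x)$ of both sides. Since $(X_1,\dots,X_n)$ is absolutely continuous and $f_k(x)>0$, the conditional distribution of $(X_1,\dots,X_n)$ given $X_k=x$ is a well-defined probability measure (its density is the joint density with $k$th coordinate frozen at $x$, normalized by $f_k(x)$), so conditional expectation is linear over the finite sum and sends the indicator of an event to its conditional probability. This gives
$$\Pr(T>t\mid X_k=x)=\sum_{\emptyset\neq S\subseteq\{1,\dots,r\}}(-1)^{|S|+1}\,\Pr\big(X_{\cup_{i\in S}P_i}>t\mid X_k=x\big),$$
and collecting the terms according to $|S|=1,2,\dots,r$ reproduces exactly the displayed expansion in the statement. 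One may note in passing that each conditional probability on the right is finite and is given explicitly by Lemma \ref{l1} when $k\in\bigcup_{i\in S}P_i$ and by Lemma \ref{l2} otherwise, but the argument above does not use these closed forms.

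I expect no genuine analytic obstacle here: the substance is the combinatorial inclusion--exclusion principle together with the bookkeeping identity $X_{P_i}\wedge X_{P_j}=X_{P_i\cup P_j}$. The only point deserving a line of justification is that conditioning on the probability-zero event $\{X_k=x\}$ yields a bona fide probability measure, which is precisely what the absolute continuity hypothesis and $f_k(x)>0$ guarantee; this legitimizes interchanging the finite alternating sum with the conditional expectation.
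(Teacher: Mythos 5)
Your argument is correct and is essentially the paper's own proof: both write $T=\max_{i=1,\dots,r}X_{P_i}$, identify $\{T>t\}$ with $\bigcup_{i=1}^r\{X_{P_i}>t\}$, use $\bigcap_{i\in S}\{X_{P_i}>t\}=\{X_{\cup_{i\in S}P_i}>t\}$, and apply inclusion--exclusion under the conditional probability given $X_k=x$. Your extra remarks on the well-definedness of the conditional distribution (via absolute continuity and $f_k(x)>0$) merely make explicit what the paper leaves implicit.
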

\begin{proof}
As $T=\max_{i=1,\dots,r} X_{P_i}$, then
$$\Pr(T>t|X_i=x) =
\Pr(\max_{i=1,\dots,r} X_{P_i}>t|X_i=x)=
\Pr\left(\cup_{i=1}^r \{ X_{P_i}>t\} |X_i=x\right)
$$
and by applying the inclusion-exclusion formula we get the stated result.
\end{proof}

\quad

Note that we obtain a generalized mixture of reliability functions that can be computed from one of  the two preceding lemmas (depending if $k$ is included in $P$ or not). 

The joint reliability function $\bar G$ of $T$ and $X_k$ can be obtained in a similar way.  Note that it can be stated as a generalized distortion based on $\bar F$, see \cite{NCLD21}. Also note that the distribution of  $(T,X_k)$ might  have a singular part.  This function $\bar G$ can be used to compute the covariance between $T$ and $X_k$ as
$Cov(T,X_k)=E(TX_k) - E(T)E(X_k)$, where
$E(TX_k)=\int_0^\infty\int_0^\infty \bar G(x,y) dx dy$.
Some examples are given below.

\section{\textcolor{blue}{ A variance-based measure of importance}}
\label{S2}

Let us consider a coherent system with lifetime
$T=\phi(X_1,\dots,X_n)$ based on possibly dependent components
with lifetimes $X_1,\dots,X_n$, where $\phi$ is the structure
function of the system (see \cite{BP81}, Chapter 1). Let us assume that components' lifetimes are non necessarily identically distributed having reliability functions $\bar{F}_i(t)=\mbox{Pr} (X_i >t)$, $i=1,\ldots, n$. 

Given the $i$th and $j$th components, we will first consider the bivariate random vectors $(X_i,T)$ and $(X_j,T)$ and the conditional random variables $\{T|X_i = x \}$ and $\{T|X_j = x\}$, for $x \geq 0$. Then, we define the classical regression curves
\begin{equation*}
\label{eq2.1}
m_i(x)=E\big[T\big|X_i = x \big] \quad\text{and} \quad m_j(x)=E\big[T\big|X_j = x \big],\quad x \geq 0,
\end{equation*}
and the error curves
\begin{equation*}
\label{eq2.2}
e_i(x)=Var\big[T\big|X_i = x \big] \quad\text{and} \quad e_j(x)=Var\big[T\big|X_j = x \big],\quad x \geq 0,
\end{equation*}
provided that they exist. It is well known that \(m_i(X_i)\),  \(e_i(X_i)\) and \(m_j(X_j)\), \(e_j(X_j)\) are pairs of univariate random variables that satisfy the ``law of total variance'', i.e.,
\begin{equation}
\label{eq2.3}
Var(T) = Var(m_i(X_i))+E(e_i(X_i)) = Var(m_j(X_j))+E(e_j(X_j)).
\end{equation}
Note that \(m_i(X_i)\) and \(e_i(X_i)\) are also denoted by $E[T\big|X_i]$ and $Var[T\big|X_i]$ in the literature, respectively. 
Consider now \(m_i(X_i)\) and \(m_j(X_j)\). Intuitively, if \(X_i\) does not have much influence on \(T\) then, by observing \(X_i=x\) we learn ``almost nothing" about \(T\); that is, \(T\) does not vary much with \(X_i\), or, in other words, ``\(T\) does not inherit much of the variability of \(X_i\)''. As a result, \(m_i(X_i)\) has a small variability. On the other hand, intuitively, if \(X_j\) has a strong influence on  \(T\) then, by observing \(X_j=x\) we learn ``a lot" about \(T\); that is, \(T\) varies much with \(X_j\), or, in other words, ``\(T\) inherits much of the variability of \(X_j\)''. As a result, \(m_j(X_j)\) has a large variability. In conclusion, if
\begin{equation}
\label{eq2.5}
\textup{Var}(m_i(X_i))<\textup{Var}(m_j(X_j)),
\end{equation}
then we have an indication that the component \(X_i\) has less influence on $T$ than \(X_j\). 

Following a similar argument, just observing \eqref{eq2.3} and \eqref{eq2.5} we deduce that if
\[
E(e_i(X_i))>E(e_j(X_j)),
\]
then, again, we have an indication that the component \(X_i\) has less influence on $T$ than \(X_j\). 

From the above intuitive discussion, we propose the classical  coefficient of determination to measure the importance of the $i$th component. We will define $R_i^2$ as the proportion of the variance in the lifetime $T$ of the system that is predictable from the $i$th component. 

\begin{Def} 
Given a coherent system with $n$ components, we define the regression importance index of the $i$th component as
\begin{equation}\label{rii}
R_i^2 = \frac{Var(m_i(X_i))}{Var(T)}=1- \frac{E(e_i(X_i))}{Var(T)}, \, \, i=1,\ldots,n.    
\end{equation}
\end{Def} 
\textcolor{blue}{The coefficient of determination $R_i^2 $ is a fundamental tool in quantitative sensitivity analysis, regression analysis, statistical dependence and other statistical models. In sensitivity analysis, the point of departure is a mathematical model $Y=f(X_1,...,X_n)$ where some of the input factors are uncertain  and the objective is to rank them in order of importance (see, for example  the book by Saltelli et al. \cite{Saltellietal}). In this context, $R_i^2 $ (with $Y$ replaced by $T$) is called the importance measure or sensitivity index. When the inputs are independent, $Var(m_i(X_i))$ can be interpreted in the context of a general variance decomposition scheme proposed by Sobol \cite{Sobol1990} and the index $R_i^2 $ is known as the Sobol's index. Sungur \cite{Sungur} considered the transformed pair $(U,V)=(F_i(X_i),F_Y(Y))$ and used the index $Var(E[V|U])/Var(V)$  to study directional dependence in a copula framework. In this context, the index is known as the copula correlation ratio (see Shih and Emura, \cite{ShihEmura2021}). However, a major difference between Sungur's approach and ours is that $R_i^2$ in \eqref{rii} is sensitive to the marginal distributions, whereas Sungur's index only depends on the copula. It is worth mentioning that whereas variance-based sensitivity measures are generally estimated numerically, in our context the index $R_i^2 $ can sometimes be computed analytically via the representation given in Theorem \ref{th1} for the conditional distribution of a system when we know the failure time of a component. We show some examples in Section \ref{compute}. }

\subsection{How to compute the regression curves.}\label{compute}

As we have mentioned, the components' lifetimes $X_1,\dots,X_n$ can be dependent, and this dependence will be represented by the corresponding survival copula $\widehat{C}$. Hence, the joint survival function can be expressed as in \eqref{SC}.
%$$\bar{F}(t_1,\ldots,t_n)= %\widehat{C}(\bar{F}_1(t_1),\dots,\bar{F}_n(t_n)).$$

Let us assume that we have a coherent system with $n$ possible dependent components and we wish to calculate the expected lifetime of the system given a value of the $i$th component. The following corollary states how to compute the corresponding regression curve. The proof is straightforward from Theorem \ref{th1} and, therefore, it is omitted.

\begin{Cor}\label{Cor1}
Let us assume that the component lifetimes of a system $(X_1,\dots,X_n)$ have a joint absolutely continuous distribution and that  $f_k(x)>0$  for an $x\geq 0$ and a $k\in \{1,\dots,n\}$.  Let $T$ be the system lifetime and let $P_1,\dots, P_r$ be its minimal path sets. Then
$$m_k(x) =\sum_{j=1}^r E\big[ X_{P_j} | X_k=x \big]-\sum_{i=1}^{r-1}\sum_{j=i+1}^{r}E\big[ X_{P_i\cup P_j} | X_k=x \big]+\dots (-1)^{r+1} E\big[ X_{P_1\cup\dots\cup P_r} | X_k=x \big].$$
\end{Cor}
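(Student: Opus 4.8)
The plan is to derive the formula for $m_k(x)$ directly by taking the conditional expectation of the generalized mixture representation established in Theorem \ref{th1}. Recall that Theorem \ref{th1} gives, under the stated absolute continuity hypothesis and $f_k(x)>0$, the inclusion--exclusion identity
$$
\Pr(T>t\mid X_k=x)=\sum_{i=1}^r \Pr(X_{P_i}>t\mid X_k=x) - \sum_{i<j} \Pr(X_{P_i\cup P_j}>t\mid X_k=x)+\dots
$$
for all $t\ge 0$. So the first step is simply to recall that for a nonnegative random variable $Z$ (here $Z=\{T\mid X_k=x\}$, or $Z=\{X_Q\mid X_k=x\}$ for a union $Q$ of minimal path sets), one has $E[Z]=\int_0^\infty \Pr(Z>t)\,dt$; this holds even when the conditional law of $X_Q$ given $X_k=x$ has a discrete atom at $x$, since the atom contributes a finite value and the tail function is still the thing being integrated.

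Second, I would integrate the identity of Theorem \ref{th1} in $t$ over $[0,\infty)$. The key point to check is that the interchange of the finite sum with the integral is legitimate — but since there are only finitely many terms (at most $2^r-1$ unions of the minimal path sets $P_1,\dots,P_r$), this is immediate with no integrability subtleties beyond the finiteness of each $E[X_Q\mid X_k=x]$, which is guaranteed because $0\le X_Q\le T$ and the relevant expectations are assumed to exist (indeed $E[X_Q\mid X_k=x]$ is finite whenever $E[T\mid X_k=x]$ is, by monotonicity $X_Q\le T$). Carrying out the integration term by term turns $\int_0^\infty \Pr(X_{P_{i_1}\cup\dots\cup P_{i_m}}>t\mid X_k=x)\,dt$ into $E[X_{P_{i_1}\cup\dots\cup P_{i_m}}\mid X_k=x]$, and we obtain exactly the claimed expression for $m_k(x)=E[T\mid X_k=x]$.

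Since this is essentially a one-line consequence of Theorem \ref{th1} together with the tail-integral formula for expectations, there is no real obstacle; the only thing worth a sentence of care is that each conditional expectation $E[X_Q\mid X_k=x]$ appearing on the right-hand side is of the form covered by Lemma \ref{l1} (when $k\in Q$) or Lemma \ref{l2} (when $k\notin Q$), so each term is a genuinely computable quantity, which is precisely why the corollary is useful for explicit calculation. Accordingly, the proof is indeed straightforward and may be omitted, as stated.
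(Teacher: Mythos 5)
Your argument is exactly the route the paper has in mind: the authors omit the proof precisely because it follows by integrating the identity of Theorem \ref{th1} in $t$ over $[0,\infty)$ and applying the tail-integral formula $E[Z]=\int_0^\infty \Pr(Z>t)\,dt$ term by term in the finite inclusion--exclusion sum. Your additional remarks (finiteness via $X_Q\le T$ and the computability of each term through Lemmas \ref{l1} and \ref{l2}) are correct and consistent with the paper's treatment.
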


Next we provide some examples to clarify the computation of $m_i(x)$ and $R^2_i$ for the $i$th component.

\begin{Exa} \label{ex1_exp}
Let us consider a series system with two dependent components and lifetime $T = \min (X_1,X_2)$, where the dependence between $X_1$ and $X_2$ is modelled by the survival copula $\widehat{C}$ and the reliability functions of both components are given by $\bar{F}_1$ and $\bar{F}_2$, respectively. Let us denote by $P=\{1,2\}$ the unique minimal path set of $T$. From Theorem \ref{th1}, we obtain %that
\begin{equation}
\label{m1seriessys}
m_1(x)= E\big[X_P \big |X_1 = x \big] = \int_{0}^{x} \widehat{C}_{1,P} (\bar{F}_{1}(x), \bar{F}_{2}(t))dt.
\end{equation}
Then, it is apparent that
$$
m_1(X_1)=_{st} \int_{0}^{F_{1}^{-1}(U)} \widehat{C}_{1,P}(1-U, \bar{F}_{2}(t))dt,
$$
where $U$ is a uniform random variable in the interval $(0,1)$, denoted by $U\sim U(0,1)$. 

In the particular case of independent components $\widehat{C}(u_1,u_2)=u_1 u_2$, and $m_1(x)$ in \eqref{m1seriessys} takes the following form
$$m_1(x) = \int_{0}^{x} \bar{F}_{2}(t)dt.$$
If we also consider that both random variables are exponentially distributed, denoted by $X_i\sim Exp(\lambda_i)$ for $i=1,2$ (with $\lambda_1$ and $\lambda_2$ the failure rates of $X_1$ and $X_2$, respectively), we easily obtain 
$$ 
m_1(x)= \int_{0}^{x} \bar{F}_{2}(t)dt = \frac{1}{\lambda_2}\left ( 1-\exp\{-\lambda_2 x\} \right ).
$$
From the above expression, a straightforward computation shows that
$$
m_1(X_1) =_{st} \frac{1}{\lambda_2}\left( 1-\exp\{-\lambda_2 X_1\} \right)
$$ 
and, therefore
$$
Var(m_1(X_1)) = \frac{\lambda_1}{(\lambda_1+2\lambda_2)(\lambda_1+\lambda_2)^2}.
$$
Now, from the well-known fact that the minimum of two independent random variables with exponential distributions has also an exponential distribution, we have that  $Var(T)=1/(\lambda_1+\lambda_2)^2$. Therefore
\begin{equation} 
\label{expinflu}
R_1^2= \frac{Var(m_1(X_1))}{Var(T)} = \frac{\lambda_1}{\lambda_1+2\lambda_2}. 
\end{equation}
From the symmetry of the considered system (series system with two independent components), we obtain the importance measure for the second component as
$$
R_2^2= \frac{Var(m_2(X_2))}{Var(T)} = \frac{\lambda_2}{\lambda_2+2\lambda_1}. 
$$

\begin{Rem}
\label{obsR2seriesystem}
Expression \eqref{expinflu} has an interesting interpretation. Firstly, if $\lambda_1=\lambda_2$ both components have the same influence on the lifetime of the system, $R_1^2=R_2^2$, as expected. Secondly, if $\lambda_1<\lambda_2$, then $R_1^2 < R_2^2$, i.e., that component with a higher failure rate (weaker component) produces a major influence on the system. This result is in concordance with several results of importance measures (see, for example, Theorem 3.8 in \cite{BP75}, Section 3.4.3 in \cite{Birn1969} or Theorem 3.3 in \cite{Natv79}). Finally, if $\lambda_1$ tends to $+\infty$, then $R_1^2$ tends to $1$ and $R_2^2$ tends to $0$, that is, the lifetime of the system could be explained exclusively by the lifetime of the component $X_1$, and $X_2$ would not have influence at all on the system. Similarly, if $\lambda_1$ tends to $0$, then $R_1^2$ tends to $0$ and $R_2^2$ tends to $1$, in this is case $X_1$ never fails, and therefore, it produces no effects on the lifetime of the system.
\end{Rem}

Observe that the bivariate case can be easily extended to the general case, i.e., a series system with $n$ dependent components and lifetime $T = \min ( X_1, \ldots X_n)$, where the component lifetimes have a survival copula $\widehat{C}$ and reliability functions $\bar{F}_{i}(t)$ for $i=1,\ldots, n$. In this case
$$
m_i(x)= E\big[X_P \big |X_i = x \big] = \int_{0}^{x} \widehat{C}_{i,P} (\bar{F}_{1}(t),\ldots,\bar{F}_{i-1}(t),\bar{F}_{i}(x),\bar{F}_{i+1}(t),\ldots, \bar{F}_{n}(t))dt,
$$
where $\widehat{C}_{i,P}(\mathbf{u})=\partial_i \widehat{C}(\mathbf{u}_{P\cup \{i\}})$. Thus, in the particular case of having an independence survival copula  $\widehat{C}(u_1,\ldots,u_n)=\displaystyle \prod_{i=1}^{n}u_i$, and exponentially distributed components, $X_i\sim Exp(\lambda_i)$ for $i=1,\ldots, n$, we obtain
\begin{equation}
\label{rc_minimalPS}
m_i(x) = \frac{1}{\displaystyle \sum_{\underset{j\not = i}{j=1}}^{n} \lambda_j} \left ( 1-\exp\left(- \displaystyle \sum_{\underset{j\not = i}{j=1}}^{n} \lambda_j x \right) \right )
\end{equation}
and
$$
R_i^2= \frac{\lambda_i}{\lambda_i+2 \displaystyle \sum_{\underset{j\not = i}{j=1}}^{n} \lambda_j}
$$
for all $i=1,\ldots,n$. Observe that $R_i^2$ has a similar interpretation as that given in Remark \ref{obsR2seriesystem}. Furthermore, it is not difficult to see that the weakest component produces the highest value of the new measure (as expected for a series system). 
\end{Exa}

\begin{Rem}
	From Corollary \ref{Cor1} and the formula given in \eqref{rc_minimalPS}, it is easy to provide a closed expression of regression curves $m_i(x)$ for any system with independent and exponentially distributed components. 
\end{Rem}

\begin{Rem}
	In Example \ref{ex1_exp} we obtain a closed-form for the measures of influence. Of course, this is not always possible. However, we will be able to approximate these measures by using simulation methods as we will see in Section \ref{estimationIndex}.    
\end{Rem}

\begin{Exa}
\label{Example34}
Let us consider now the system with lifetime $T=\max(X_1,\min(X_2,X_3))$ and dependent components with survival copula $\hat{C}$. The minimal path sets are $P_1=\{1\}$ and $P_2=\{2,3\}$. The reliability function of $\{T|X_1=x\}$ is
$$\Pr(T>t|X_1=x)=\Pr(X_{P_1}>t|X_1=x)+\Pr(X_{P_2}>t|X_1=x)-\Pr(X_{P_1\cup P_2}>t|X_1=x)$$
for $t>x$ ($1$ elsewhere). Obviously, $\Pr(X_{P_1}>t|X_1=x)=0$ for $t\geq x$. From the Lemma \ref{l2}, we have
$$\Pr(X_{P_2}>t|X_1=x)=\partial_1 \hat{C}(\bar F_1(x),\bar F_2(t),\bar F_3(t))$$ 
for $t\geq 0$ and from Lemma \ref{l1},
$$\Pr(X_{P_1\cup P_2}>t|X_1=x)=\partial_1 \hat{C}(\bar F_1(x),\bar F_2(t),\bar F_3(t))$$
for $0\leq t<x$ (zero for $t\geq x$). Therefore
$$\Pr(T>t|X_1=x)=\Pr(X_{P_2}>t|X_1=x)=\partial_1 \hat{C}(\bar F_1(x),\bar F_2(t),\bar F_3(t))$$
for $t\geq x$ ($1$ elsewhere). Then
$$m_1(x)=E(T|X_1=x)=x+\int_x^\infty \partial_1 \hat{C}(\bar F_1(x),\bar F_2(t),\bar F_3(t))dt\,\,\,\mbox{for}\,\,\,x\geq 0.$$
In particular, if the components are independent, then 
$$m_1(x)=x+\int_x^\infty \bar F_2(t)\bar F_3(t)dt\,\,\,\mbox{for}\,\,\,x\geq 0.$$
If all the components are exponentially distributed, $X_i\sim Exp(\lambda_i)$ for $i=1,2,3$, then 
\begin{equation}
\label{m1Ex36}
m_1(x)=x+\int_x^\infty \exp(-(\lambda_2+\lambda_3)t)dt=x+\frac{1}{\lambda_2+\lambda_3} \exp(-(\lambda_2+\lambda_3)x)
\end{equation}
for $x\geq 0$. Note that the joint distribution of $(X_1,  T)$ has a singular part with 
\begin{align*}
\Pr(T=X_1)&=1-\Pr(X_1<\min(X_2,X_3))=1-\int_0^\infty \lambda_1\exp(-(\lambda_1+\lambda_2+\lambda_3)t)dt\\
          &=1-\frac{\lambda_1}{\lambda_1+\lambda_2+\lambda_3}=\frac{\lambda_2+\lambda_3}{\lambda_1+\lambda_2+\lambda_3}.
\end{align*}
Analogously, for the second component we obtain
$$\Pr(T>t|X_2=x)=\Pr(X_{P_1}>t|X_2=x)+\Pr(X_{P_2}>t|X_2=x)-\Pr(X_{P_1\cup P_2}>t|X_2=x)$$
for $t\geq 0$. To obtain the first reliability function in the mixture we use Lemma \ref{l2}. The second and third reliability functions are obtained by using Lemma \ref{l1}, getting that
$$\Pr(X_{P_1}>t|X_2=x)=\partial_2 \hat{C}(\bar F_1(t),\bar F_2(x),1)\,\,\,\, \mbox{for}\,\,\,\, t\geq 0.$$
$$\Pr(X_{P_2}>t|X_2=x)=\partial_2 \hat{C}(1,\bar F_2(x),\bar F_3(t))\,\,\,\, \mbox{for}\,\,\,\, 0\leq t< x\,\,(\mbox{zero for}\,\,t\geq x).$$
$$\Pr(X_{P_1\cup P_2}>t|X_2=x)=\partial_2 \hat{C}(\bar F_1(t),\bar F_2(x),\bar F_3(t))\,\,\,\, \mbox{for}\,\,\,\, 0\leq t< x\,\,(\mbox{zero for}\,\,t\geq x).$$
If the components are independent, then

$$\Pr(T>t|X_2=x)=\left\{\begin{array}{ccc}
\bar{F}_1(t)+\bar{F}_3(t)-\bar{F}_1(t)\bar{F}_3(t)& &0\leq t< x,\\
%&&\\
\bar{F}_1(t)							  & & x\leq t.\\
\end{array}\right.
$$
In particular, if all the components are exponentially distributed, then
$$m_2(x)=\int_0^x (\exp(-\lambda_1\,t)+\exp(-\lambda_3\,t)-\exp(-(\lambda_1+\lambda_3)\,t))dt+\int_x^\infty \exp(-\lambda_1\,t)dt,$$
that is,
\begin{equation}
\label{m2Ex36}
m_2(x)=\frac{1}{\lambda_1}+\frac{1}{\lambda_3}-\frac{1}{\lambda_1+\lambda_3}-\frac{1}{\lambda_3}\exp(-\lambda_3\,x)+\frac{1}{\lambda_1+\lambda_3}\exp(-(\lambda_1+\lambda_3)\,x).
\end{equation}
%$$m_2(x)=2(1-\exp(-x)) - \frac 1 2 (1-\exp(-2x))+\exp(-x)=\frac 3 2-e^{-x}+\frac 1 2e^{-2x}$$
From \eqref{m1Ex36} we consider now the random variable $$Z_1:=m_1(X_1)=X_1+\frac{1}{\lambda_2+\lambda_3} \exp(-(\lambda_2+\lambda_3)X_1)$$
with variance $Var[Z_1]=E[Z_1^2]-E[Z_1]^2$ computed as follows
$$E[Z_1]=E(X_1)+\frac{1}{\lambda_2+\lambda_3}E(\exp(-(\lambda_2+\lambda_3)X_1))=\frac{1}{\lambda_1}+\frac{1}{\lambda_2+\lambda_3}\frac{\lambda_1}{\lambda_1+\lambda_2+\lambda_3},$$
\begin{align*}
E[Z_1^2]&=\int_0^\infty \lambda_1(x+\frac{1}{\lambda_2+\lambda_3} \exp(-(\lambda_2+\lambda_3)x))^2 \exp(-\lambda_1 x)dx\\
        &=\frac{2}{\lambda_1^2}+\frac{\lambda_1}{(\lambda_2+\lambda_3)^2(\lambda_1+2\lambda_2+2\lambda_3)}+\frac{2\lambda_1}{(\lambda_2+\lambda_3)(\lambda_1+\lambda_2+\lambda_3)^2}.
\end{align*}

\noindent Hence,
\begin{align*}
Var[m_1(X_1)]=&\frac{2}{\lambda_1^2}+\frac{\lambda_1}{(\lambda_2+\lambda_3)^2(\lambda_1+2\lambda_2+2\lambda_3)}+\frac{2\lambda_1}{(\lambda_2+\lambda_3)(\lambda_1+\lambda_2+\lambda_3)^2}\\
&-\left (\frac{1}{\lambda_1}+\frac{\lambda_1}{(\lambda_2+\lambda_3)(\lambda_1+\lambda_2+\lambda_3)} \right )^2.
\end{align*}

From \eqref{m2Ex36} we consider now the random variable
$$Z_2:=m_2(X_2)= \frac{1}{\lambda_1}+\frac{1}{\lambda_3}-\frac{1}{\lambda_1+\lambda_3}-\frac{1}{\lambda_3}\exp(-\lambda_3\,X_2)+\frac{1}{\lambda_1+\lambda_3}\exp(-(\lambda_1+\lambda_3)\,X_2),$$
\noindent with 
%$$E[Z_2] = \frac{1}{\lambda_1}+\frac{1}{\lambda_3}-\frac{1}{\lambda_1+\lambda_3}-\frac{\lambda_2}{\lambda_3(\lambda_2+\lambda_3)}+\frac{\lambda_2}{(\lambda_1+\lambda_3)(\lambda_1+\lambda_2+\lambda_3)}$$
$$E[Z_2] =\frac{1}{\lambda_1}+\frac{\lambda_1}{(\lambda_2+\lambda_3)(\lambda_1+\lambda_2+\lambda_3)},$$
\begin{align*}
E[Z_2^2]=& \int_0^\infty \lambda_2m_2(x)^2 \exp(-\lambda_2 x)dx\\
        =&\,\, \kappa^2+\frac{\lambda_2}{\lambda_3^2(\lambda_2+2\lambda_3)}-\frac{2\lambda_2\kappa}{\lambda_3(\lambda_2+\lambda_3)}
           +\frac{\lambda_2}{(\lambda_1+\lambda_3)^2(2\lambda_1+\lambda_2+2\lambda_3)}\\
           &+\frac{2\lambda_2\kappa}{(\lambda_1+\lambda_3)(\lambda_1+\lambda_2+\lambda_3)}-\frac{2\lambda_2}{\lambda_3(\lambda_1+\lambda_3)(\lambda_1+\lambda_2+2\lambda_3)},
\end{align*}
\noindent where $\kappa = \frac{(\lambda_1+\lambda_3)^2-\lambda_1\lambda_3}{\lambda_1\lambda_3(\lambda_1+\lambda_3)}.$ Hence, 
\begin{align*}
Var[m_2(X_2)]=&\kappa^2+\frac{\lambda_2}{\lambda_3^2(\lambda_2+2\lambda_3)}-\frac{2\lambda_2\kappa}{\lambda_3(\lambda_2+\lambda_3)}
           +\frac{\lambda_2}{(\lambda_1+\lambda_3)^2(2\lambda_1+\lambda_2+2\lambda_3)}\\
           &+\frac{2\lambda_2\kappa}{(\lambda_1+\lambda_3)(\lambda_1+\lambda_2+\lambda_3)}-\frac{2\lambda_2}{\lambda_3(\lambda_1+\lambda_3)(\lambda_1+\lambda_2+2\lambda_3)}\\
           &-\left( \frac{1}{\lambda_1}+\frac{\lambda_1}{(\lambda_2+\lambda_3)(\lambda_1+\lambda_2+\lambda_3)}  \right)^2.
\end{align*}

In order to calculate the proposed index for the first and second component, we need to compute $Var(T)$. The reliability function of $T$ is 
$$\bar{F}_{T}(t)=\exp(-\lambda_1\,t)+\exp(-(\lambda_2+\lambda_3)\,t)-\exp(-(\lambda_1+\lambda_2+\lambda_3)\,t)$$
\noindent for $t\geq 0$. Therefore, 
\begin{equation}
\label{ET}
E(T) = \frac{1}{\lambda_1}+\frac{1}{\lambda_2+\lambda_3}-\frac{1}{\lambda_1+\lambda_2+\lambda_3}    
\end{equation}
and
\begin{equation}
\label{ET2}
E(T^2) = \frac{2}{\lambda_1^2}+\frac{2}{(\lambda_2+\lambda_3)^2}-\frac{2}{(\lambda_1+\lambda_2+\lambda_3)^2}.    
\end{equation}
From \eqref{ET} and \eqref{ET2} we get 
$$Var(T) = \frac{2}{\lambda_1^2}+\frac{2}{(\lambda_2+\lambda_3)^2}-\frac{2}{(\lambda_1+\lambda_2+\lambda_3)^2}-\left(\frac{1}{\lambda_1}+\frac{1}{\lambda_2+\lambda_3}-\frac{1}{\lambda_1+\lambda_2+\lambda_3}\right)^2.$$
 
 In the case of taking $\lambda_1=\lambda_2=\lambda_3=1$, we obtain that
\begin{equation*}
\label{R12Ex34}
R_1^2 = \frac{Var(m_1(X_1))}{Var(T)}=\frac{4/5}{11/12}=\frac{48}{55}=0.872727273,    
\end{equation*}
\begin{equation*}
\label{R22Ex34}
R_2^2 = \frac{Var(m_2(X_2))}{Var(T)}=\frac{1/45}{11/12}=\frac{12}{495}=0.02424242.    
\end{equation*}

Obviously, by the symmetry of the system and copula, we conclude that $R_3^2=R_2^2$. In the case of independent and exponentially distributed components, with the same hazard rate, we can deduce that the first component is the most relevant of the system. This component will determine the lifetime of the system with more frequency than the rest of components. Furthermore, for a given lifetime of $X_1$, $m_1(X_1)$ provide the best estimation of the system lifetime.

\textcolor{blue}{\section{Conditions that lead to comparisons in importance}}
\label{cond2compare}
\textcolor{blue}{Sometimes, it is more important to study the relative importance of the components than the corresponding values of the importance measures. It is clear that $R_i^2\leq R_j^2$ holds if, and only if, $E[(m_i(X_i))^2]\leq E[(m_j(X_j))^2]$. However, these expected values are not always simple to compute, and a stochastic comparison of the random variables \(m_i(X_i)\) and \(m_j(X_j)\) can be a more suitable strategy, as suggested in \cite{ShSS} and \cite{ShSS02}. It follows from \eqref{rii}  that if \(m_i(X_i)\) is smaller than \(m_j(X_j)\) with respect to some univariate variability order, then the component \(X_i\) has less influence on $T$ than \(X_j\). Explicitly, let \(\le_{\textup{variability}}\) be a variability order (such as convex order, dispersive order or excess wealth order), then 
\[
m_i(X_i)\le_{\textup{variability}}m_j(X_j) \Longrightarrow  R_i^2 \leq R_j^2. 
\]
Analogously, if \(e_i(X_i)\) is larger than \(e_j(X_j)\)  with respect to some univariate stochastic order of magnitude, then the component \(X_i\) has less influence on $T$ than \(X_j\). Explicitly, let \(\le_{\textup{magnitude}}\) be a univariate order of magnitude (such as  usual stochastic order, increasing convex order or increasing concave order), then 
\[
e_i(X_i)\ge_{\textup{magnitude}}e_j(X_j) \Longrightarrow  R_i^2 \leq R_j^2. 
\]}
\textcolor{blue}{Next, we recall the stochastic orders that will be used in this section. Firstly, we define the usual stochastic order ($\leq_{st}$) and the convex order ($\leq_{cx}$) for the univariate case. Secondly, we define the usual stochastic order for the bivariate case. For properties and theoretical results related with these stochastic orders see \cite{Sh1}. In the sequel, ``increasing'' and ``decreasing'' stand for ``nondecreasing'' and ``nonincreasing'', respectively. The symbol `\(=_{\textup{st}}\)' denotes equality in law.} 

\textcolor{blue}{\begin{Def}
Let $X$ and $Y$ be two non-negative random variables with reliability functions $\bar{F}$ and $\bar{G}$, respectively. 
\begin{enumerate}
\item We say that $X$ is smaller than $Y$ in the usual stochastic order (denoted by $X\leq_{st}Y$) if, and only if, $\bar{F}(x)\leq \bar{G}(x)$ for all $x\in \IR$.
\item $X$ is smaller than $Y$ in the convex order (denoted by $X\leq_{cx}Y$) if, and only if, $E[\phi(X)]\leq E[\phi(Y)]$ for all convex functions $\phi: \IR\rightarrow \IR$ for which the expectations exist.
 \end{enumerate}\end{Def}%}
%\textcolor{blue}{
\begin{Def}
Given  two bidimensional random vectors $\boldsymbol{X}$ and $\boldsymbol{Y}$, we say that $\boldsymbol{X}$ is smaller than $\boldsymbol{Y}$ in the usual stochastic order (denoted by $\boldsymbol{X}\le_{st}\boldsymbol{Y}$) if $E[\phi(\boldsymbol{X})] \leq E[\phi(\boldsymbol{Y})]$ for all increasing real-valued functions $\phi$ for which these expectations exist.
\end{Def}}

 \textcolor{blue}{To compare the random variables $m_i(X_i)$ and $m_j(X_j)$, in terms of the convex order, we will need to require the strictly monotonic condition to the functions $m_i(x)$ and $m_j(x)$. Firstly,} 
 we start analysing the function $m_i(x)$. Intuitively, one could think that $m_i(x)=E[T|X_i=x]$ is an increasing function, however this is not always true. Let us consider a series system with lifetime $T = \min(X_1,X_2)$, where $X_1$ and $X_2$ are two dependent components. Let us assume that the dependence structure is modelled by the FGM copula of dimension 2:

\begin{equation}
\label{copulaFGM2}
\textcolor{blue}{C(u_1,u_2)} = u_1\,u_2\,[1+\theta\,(1-u_1)\,(1-u_2)], \,\,\,\, u_1,u_2\in[0,1].
\end{equation}

\textcolor{blue}{Note that the FGM copula is radially symmetric, i.e., $\hat{C}(u_1,u_2)=C(u_1,u_2)$ for all $u_1,u_2\in [0,1]$, where $\hat{C}$ is the corresponding survival copula.} Thus, from Lemma \ref{l1} we have that 

\begin{equation*}
m_1(x) = E[T|X_1=x] = \int_0^x \hat{C}_{1,P}(\bar F_1(x),\bar F_2(t) ) dt.
\end{equation*}

Hence, in the case of having exponentially distributed components, $X_i\sim Exp(\lambda_i)$ for $i=1,2$, we obtain that 

\begin{equation*}
m_1(x) = \frac{1-\exp(-\lambda_2\,x)}{\lambda_2}+\frac{\theta}{\lambda_2}(1-2\exp(-\lambda_1\,x))\left [1-\exp(-\lambda_2\,x)-\frac{1}{2}(1-\exp(-2\lambda_2\,x))\right].
\end{equation*}

Figure \ref{fig_Cexmple} shows the plot of $m_1$ (red line) in the case of having $\theta=-1$ in \eqref{copulaFGM2} (negative dependence) and $\lambda_i = i$ for $i=1,2$, the failure rates of $X_1$ and $X_2$, respectively. As we can see in Figure \ref{fig_Cexmple}, $m_1(x)$ is not increasing for all $x$. Note that for $\theta=-1$, the FGM copula does not express a significant discordance. Indeed, the Kendall's tau takes the value $-2/9$, see Example 5.2. in \cite{Ne06}.

\begin{figure}[t!]
	\begin{center}
		\includegraphics*[scale=0.4]{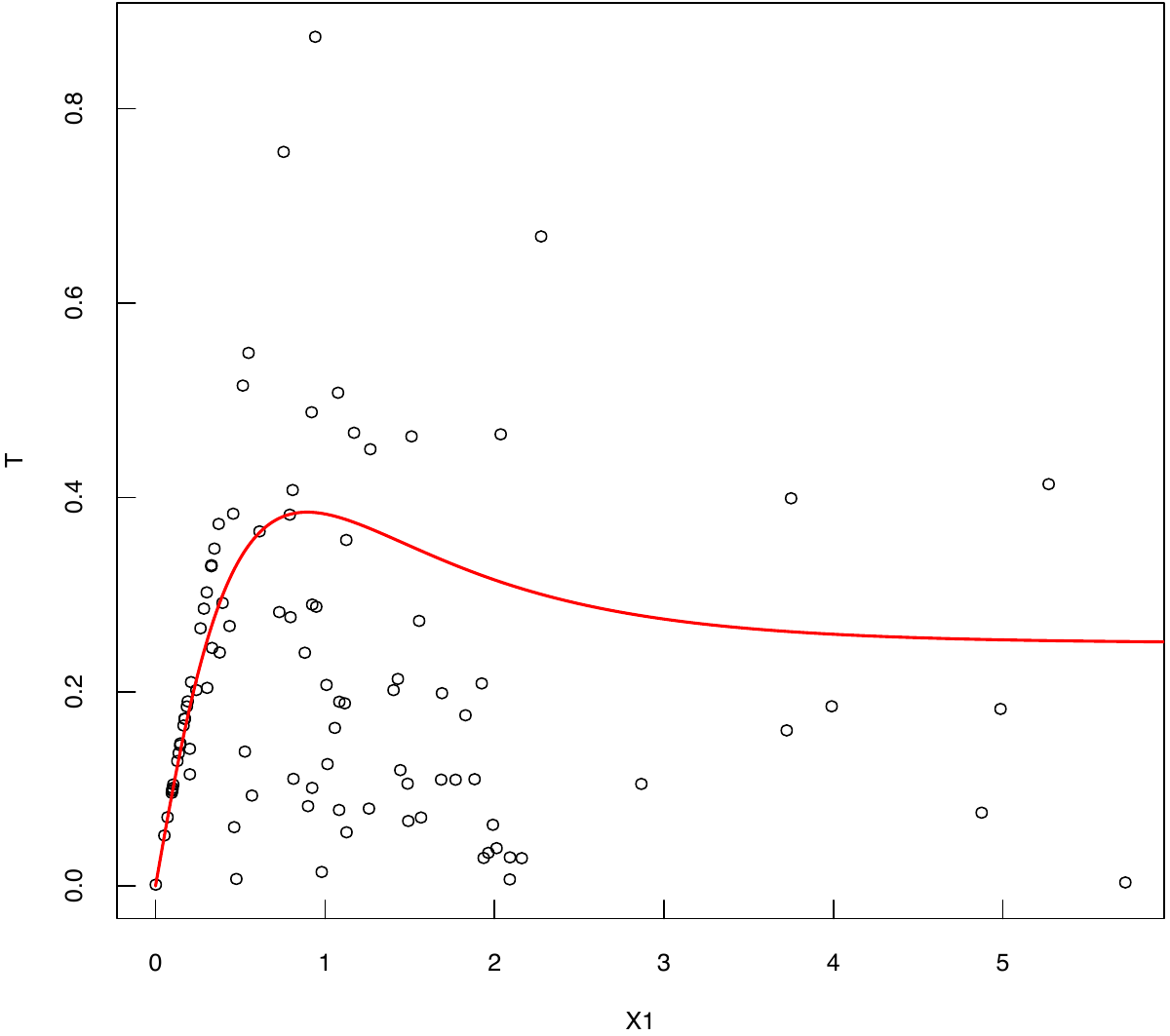}
		\caption{Plot of $m_1$ (red line) jointly with $100$ simulated data.} \label{fig_Cexmple}
	\end{center}
\end{figure}

Next, we provide some conditions to ensure the monotonicity of $m_1$. To do that, we recall first some definitions related with the idea of positively dependent structure in a random vector. The following dependence structure was initially called as positive regression dependence by Lehmann in \cite{Leh}. However, we will refer to this concept as stochastically increasing, following the terminology used by Shaked in \cite{Sh77}.

\begin{Def}
Let $X$ and $Y$ be two random variables, we say that $Y$ is stochastically increasing (SI) in $X$ if
$$\{Y|X=x_1\}\leq_{st} \{Y|X=x_2\}$$
\noindent for all $x_1,x_2\in \IR$ such that $x_1\leq x_2$.
\end{Def}

The notion SI can be generalized to random vectors $(X_1,X_2,\ldots,X_n)$ in different ways, see page 21 in \cite{J2}. \textcolor{blue}{Here, we consider the notions conditionally increasing in sequence (CIS) and conditionally increasing (CI) as natural extensions of SI for the multivariate case.} 

\begin{Def}
The random vector $(X_1,X_2,\ldots,X_n)$ is conditionally increasing in sequence if, for $i = 2,\ldots, n$, 
$$\{X_i|X_1=x_1,\ldots,X_{i-1}=x_{i-1}\}  \leq_{st}   \{X_i|X_1=x_1^{\prime},\ldots,X_{i-1}=x_{i-1}^{\prime}\}$$
\noindent for all $x_j\leq x_j^{\prime}$ and $j=1,\ldots, i-1.$
\end{Def}

\begin{Def}
A random vector $(X_1,\ldots,X_n)$ is conditionally increasing (CI) if, and only if, the random vector $(X_{\pi(1)},\ldots,X_{\pi(n)})$ is CIS for all permutation $\pi\in \Pi_n$.
\end{Def}

The CI concept was introduced by M\"uller and Scarsini in \cite{mull0}. The following result provides a sufficient condition to ensure the monotonicity of $m_i(x).$

\begin{The}
\label{mi_increasing}
Let $T$ be the lifetime of a coherent system with $n$ components and let $X_i$ be the lifetime of the $i$th component for $i=1,\ldots,n$. If the random vector $(X_1,\ldots,X_n)$ is absolutely continuous CI, then $m_i(x)=E[T|X_i=x]$ is increasing for all $i=1,2,\ldots,n$.
\end{The}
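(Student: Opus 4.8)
The plan is to reduce the statement to a single monotonicity property of the conditional law of the component vector. Write $T=\phi(X_1,\dots,X_n)$ with $\phi(\mathbf z)=\max_{k}\min_{j\in P_k}z_j$, which is nondecreasing in every coordinate. By the substitution property of conditional expectation, on $\{X_i=x\}$ we may freeze the $i$th argument of $\phi$ at $x$, so
\begin{equation*}
m_i(x)=E\big[\psi_x(\mathbf Z)\mid X_i=x\big],\qquad \mathbf Z:=(X_j)_{j\ne i},
\end{equation*}
where $\psi_x(\mathbf z)$ denotes $\phi$ evaluated with $x$ in the $i$th slot and $\mathbf z$ in the remaining slots; note that $\mathbf z\mapsto\psi_x(\mathbf z)$ is nondecreasing and $x\mapsto\psi_x(\mathbf z)$ is nondecreasing pointwise. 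Granting the claim that $x\leq x'$ implies $\{\mathbf Z\mid X_i=x\}\leq_{st}\{\mathbf Z\mid X_i=x'\}$ in the (multivariate) usual stochastic order, we get for $x\leq x'$
\begin{equation*}
m_i(x)=E\big[\psi_x(\mathbf Z)\mid X_i=x\big]\leq E\big[\psi_{x'}(\mathbf Z)\mid X_i=x\big]\leq E\big[\psi_{x'}(\mathbf Z)\mid X_i=x'\big]=m_i(x'),
\end{equation*}
the first inequality from pointwise monotonicity in $x$ and the second from the claim applied to the increasing map $\mathbf z\mapsto\psi_{x'}(\mathbf z)$. (Running the same computation with $h\circ\psi_x$ for an arbitrary increasing $h$ even yields the stronger fact $\{T\mid X_i=x\}\leq_{st}\{T\mid X_i=x'\}$; only the mean version is needed for Theorem \ref{mi_increasing}.)

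It remains to prove the claim. Fix $i$ and choose a permutation $\pi$ with $\pi(1)=i$; by CI the reordered vector $(X_{\pi(1)},\dots,X_{\pi(n)})$ is CIS, and since multivariate $\leq_{st}$ is invariant under a common permutation of coordinates it suffices to show: if $(Y_1,\dots,Y_n)$ is absolutely continuous and CIS, then $\{(Y_2,\dots,Y_n)\mid Y_1=y\}$ is nondecreasing in $y$ in the multivariate $\leq_{st}$ order. I would prove this by coupling. Absolute continuity lets us disintegrate the joint density into conditional kernels $K_k(y_1,\dots,y_{k-1};\cdot)$, the conditional distribution of $Y_k$ given $Y_1=y_1,\dots,Y_{k-1}=y_{k-1}$; the CIS property says exactly that $K_k(y_1,\dots,y_{k-1};\cdot)\leq_{st}K_k(y_1',\dots,y_{k-1}';\cdot)$ whenever $y_j\leq y_j'$ for all $j$. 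Take i.i.d.\ $U_2,\dots,U_n\sim U(0,1)$ and, for $y\leq y'$, define recursively via quantile transforms $Y_2=K_2(y;\cdot)^{-1}(U_2)$, $Y_k=K_k(y,Y_2,\dots,Y_{k-1};\cdot)^{-1}(U_k)$, and likewise $Y_2'=K_2(y';\cdot)^{-1}(U_2)$, $Y_k'=K_k(y',Y_2',\dots,Y_{k-1}';\cdot)^{-1}(U_k)$. By the chain rule, $(Y_2,\dots,Y_n)$ has law $\{(Y_2,\dots,Y_n)\mid Y_1=y\}$ and $(Y_2',\dots,Y_n')$ has law $\{(Y_2,\dots,Y_n)\mid Y_1=y'\}$; and an induction on $k$, using that $\mu\leq_{st}\nu$ forces $F_\mu^{-1}(u)\leq F_\nu^{-1}(u)$ for every $u$, gives $Y_k\leq Y_k'$ a.s.\ for all $k$. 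Since both vectors live on the same probability space, coordinatewise domination yields $E[g(Y_2,\dots,Y_n)]\leq E[g(Y_2',\dots,Y_n')]$ for every increasing $g$, which is the claim.

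I expect the genuine obstacle to be the measure-theoretic housekeeping inside the coupling: choosing a jointly measurable version of the disintegration so that $(y_1,\dots,y_{k-1},u)\mapsto K_k(y_1,\dots,y_{k-1};\cdot)^{-1}(u)$ is measurable and the recursively defined $Y_k,Y_k'$ are bona fide random variables, and checking that holding the first coordinate fixed in the sequential construction really reproduces the conditional law $\{(Y_2,\dots,Y_n)\mid Y_1=y\}$ (here one uses independence of $U_k$ from $Y_2,\dots,Y_{k-1}$). This is precisely where absolute continuity is used — continuous conditional c.d.f.'s and a well-behaved disintegration. Everything else — the monotone induction and assembling the displays of the first paragraph — is routine. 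Finally, note that the argument only invokes CIS for the orderings that place $i$ first, so the full strength of CI is used merely to cover all indices $i$ simultaneously.
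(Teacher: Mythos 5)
Your argument is correct, and its overall strategy coincides with the paper's: both proofs reduce monotonicity of $m_i$ to the stochastic monotonicity statement $\{T\mid X_i=x\}\le_{st}\{T\mid X_i=x'\}$ for $x\le x'$, obtained from the fact that $T$ is an increasing function of a CI vector. The difference is in how that key step is handled. The paper's proof is essentially a one-line citation: it invokes Corollary 11 of Sordo, Su\'arez-Llorens and Bello (2015), which states directly that an increasing function $\psi(X_1,\dots,X_n)$ of an absolutely continuous CI vector is SI in each $X_i$. You instead re-derive this ingredient from first principles: you freeze the $i$th argument, reduce to the assertion that $\{(X_j)_{j\ne i}\mid X_i=x\}$ is nondecreasing in $x$ in the multivariate usual stochastic order, and prove that assertion by the standard sequential quantile-transform coupling driven by the CIS property of a reordering that places $i$ first. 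That coupling argument is sound (it is the classical construction behind Theorem 6.B.3 in Shaked and Shanthikumar), and your closing remarks correctly identify both where absolute continuity enters (a well-behaved disintegration and measurable conditional quantile maps) and that CI is only needed to handle all indices $i$ simultaneously. What your route buys is self-containedness and the explicit stronger conclusion $\{T\mid X_i=x\}\le_{st}\{T\mid X_i=x'\}$, which the paper also obtains but only via the external reference; what the paper's route buys is brevity, at the cost of hiding exactly the coupling you spell out.
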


\begin{proof}
It is well-known that the lifetime of the system can be written as $T = \psi(X_1,\ldots, X_n)$ where $\psi:\IR^n\rightarrow \IR$ is an increasing real function. If the random vector $(X_1,\ldots,X_n)$ is absolutely continuous CI, then from Corollary 11 in \cite{Sord}, we obtain that $\psi(X_1,\ldots, X_n)$ is SI in $X_i$ for all $i = 1,\ldots,n$, i.e., $\{T|X_i=x\}\leq_{st} \{T|X_i=x^{\prime}\}$ for all $x\leq x^{\prime}$ and $i = 1,\ldots,n$. Therefore, $m_i(x)$ is increasing for all x and $i=1,2,\ldots,n$. 
\end{proof}

A sufficient condition for CI is the property MTP$_2$, investigated by Karlin and Rinott in \cite{Kar}. If a copula function satisfies the properties CIS, CI or MTP$_2$, then a random vector with dependence structure given by that copula inherits the same property, see Proposition 3.5 in \cite{mull0}. There exist many examples of copulas which satisfy either the CI or MTP$_2$ condition. The family of Archimedean copulas are CI. Specifically, the independence copula is MTP$_2$. Cerqueti and Lupi propose in \cite{Cerqueti} a new family of non-exchangeable Archimedean copulas which leads to an MTP$_2$ property. Further examples can be found in \cite{mull0}, \cite{Kar} and \cite{ShSpzz}. 

\begin{Rem}
Under the assumptions of Theorem \ref{mi_increasing}, we cannot ensure that $m_i(x)$ is strictly increasing. Indeed, if we consider $T=\min(X_1,X_2)$ the lifetime of a series system with two independent components and we assume that $X_1$ and $X_2$ are interval valued random variables given by two uniform distributions $U(0,n)$ and $U(0,m)$, respectively, with $n<m$. Then
%$$\bar{F}_{T|X_2=x}(t)=\left\{\begin{array}{ccc}
%\bar{F}_1(t) & & 0\leq t < x\\
%0            & & t\geq x\\ 
%\end{array}\right.$$
%\noindent for all $x\in [0,n)$ and
$$\bar{F}_{T|X_2=x}(t)=\left\{\begin{array}{ccc}
\bar{F}_1(t) & & 0\leq t < n,\\
0            & & t\geq n\\ 
\end{array}\right.$$
\noindent for all $x\in [n,m]$. Note that $\bar{F}_{T|X_2=x}(t)$ does not depend on $x$. Thus, $m_2(x) = E[T|X_2=x]$ is constant for all $x\in [n,m]$.\\

The following result ensures that $m_i(x)$ is strictly increasing under some assumptions. We provide first a remark.

\begin{Rem}
\label{lemmath}
If a random vector $(X_1,\ldots,X_n)$ is absolutely continuous with support $[0,\infty)^n$ then, any conditional random vector of $n-1$ components $(X_1,\ldots,X_{i-1},X_{i+1},\ldots,X_n|X_i=t)$ with $t\geq 0$ is absolutely continuous with support in $[0,\infty)^{n-1}$. 
\end{Rem}

\begin{The}
\label{mainTH}
Let $T=\psi\left(X_{1}, \ldots, X_{n}\right)$ be the lifetime of a coherent system based on $n$ absolutely continuous components with joint support $[0,\infty)^n$. \textcolor{blue}{If the copula $C$ of the random vector $(X_1,\ldots,X_n)$ is CI and the corresponding survival copula $\hat{C}$ have continuous partial derivatives,} then $m_i(x)$ is strictly increasing for all $x\geq 0$. 
\end{The}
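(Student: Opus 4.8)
The plan is to argue by contradiction, boosting the monotonicity already supplied by Theorem~\ref{mi_increasing} to strict monotonicity by exploiting the coherence of the system (every component is relevant) together with the absolute continuity of the conditional law of the remaining components; throughout, write $\mathbf{X}_{-i}=(X_j)_{j\neq i}$. Since $C$ is CI the vector $(X_1,\dots,X_n)$ is CI (Proposition~3.5 in \cite{mull0}), so Theorem~\ref{mi_increasing} applies and $m_i$ is non-decreasing; moreover, as in the proof of that theorem, Corollary~11 in \cite{Sord} gives that $T$ is SI in $X_i$, i.e.\ $\{T\,|\,X_i=x\}\le_{st}\{T\,|\,X_i=x'\}$ for $x\le x'$. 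The continuity of the partial derivatives of $\widehat{C}$ guarantees, via Lemmas~\ref{l1}--\ref{l2} and Theorem~\ref{th1}, that these conditional reliability functions and the curve $m_i$ are the canonical (continuous) versions, so that the pointwise statement makes sense. Now suppose $m_i$ is not strictly increasing. Being non-decreasing, it is then constant on some interval $[a,b]$, and shrinking it if necessary we may assume $0<a<b$. For $a\le x<x'\le b$ we have $\{T\,|\,X_i=x\}\le_{st}\{T\,|\,X_i=x'\}$ and $E[T\,|\,X_i=x]=m_i(x)=m_i(x')=E[T\,|\,X_i=x']$; for non-negative random variables, $\le_{st}$ together with equal finite means forces equality in law, so there is a single probability measure $\mu$ with $\{T\,|\,X_i=x\}=\mu$ for all $x\in[a,b]$.

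Next I would extract the structural information. Because the system is coherent, component $i$ is relevant and hence lies in some minimal path set $P$. Put $W=\max_{k:\,i\in P_k}\min_{j\in P_k\setminus\{i\}}X_j$ and $V=\max_{k:\,i\notin P_k}\min_{j\in P_k}X_j$, both measurable functions of $\mathbf{X}_{-i}$ alone (with the conventions $\min\emptyset=+\infty$, $\max\emptyset=0$). The minimal path set representation $T=\max_k\min_{j\in P_k}X_j$ together with the identity $\max_k\min(c,a_k)=\min(c,\max_k a_k)$ yields $T=\max\{\min(X_i,W),\,V\}$. Fix $M>b$ and let $B=\prod_{j\in P\setminus\{i\}}(M,\infty)\times\prod_{j\notin P}(0,a)$, a nonempty open subset of $(0,\infty)^{n-1}$ (nonempty since $a>0$). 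On $\{\mathbf{X}_{-i}\in B\}$ one has $W\ge M>b$; and for each minimal path set $P_k$ with $i\notin P_k$, minimality of $P$ forces $P_k\not\subseteq P$, so $P_k$ contains an index outside $P$ and therefore $V<a$. Consequently, conditionally on $X_i=x$ with $x\in[a,b]$, the event $\{\mathbf{X}_{-i}\in B\}$ gives $\min(X_i,W)=x$ and $\max(x,V)=x$, i.e.\ $T=x$. (The degenerate cases $P=\{i\}$ or $\{k:i\notin P_k\}=\emptyset$ are covered by the same computation.)

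To close the argument, by Remark~\ref{lemmath} the conditional distribution of $\mathbf{X}_{-i}$ given $X_i=x$ is absolutely continuous with support $[0,\infty)^{n-1}$ for every $x\ge 0$, so $p(x):=\Pr(\mathbf{X}_{-i}\in B\,|\,X_i=x)>0$ because $B$ is nonempty and open. Combining this with the previous step, for every $x\in[a,b]$ the conditional law $\{T\,|\,X_i=x\}$ carries mass at least $p(x)>0$ at the single point $x$; hence $\mu(\{x\})\ge p(x)>0$ for every $x\in[a,b]$. This is impossible, since a probability measure can have only countably many atoms while $[a,b]$ is uncountable. Therefore $m_i$ is strictly increasing on $[0,\infty)$, for each $i=1,\dots,n$.

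The main obstacle is the second step: one must pin down a configuration of the other components for which the system lifetime collapses to exactly $X_i$ over the whole interval $[a,b]$, and this is precisely where coherence enters (relevance of $i$ to produce a minimal path set through $i$, and minimality to control the path sets avoiding $i$); the bookkeeping for the degenerate structures should be verified but is harmless. The remaining ingredients — the SI property from Corollary~11 in \cite{Sord}, the ``$\le_{st}$ plus equal means implies equality in law'' fact, and the positivity $p(x)>0$ from Remark~\ref{lemmath} — are routine.
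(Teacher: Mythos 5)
Your proof is correct, and its skeleton matches the paper's: both arguments reduce to showing that under CI the conditional laws $\{T\mid X_i=a\}$ and $\{T\mid X_i=b\}$ are stochastically ordered, so equality of the means $m_i(a)=m_i(b)$ forces equality of the laws, and both then derive a contradiction from the fact that $\Pr(T=x\mid X_i=x)>0$. Where you diverge is in the two supporting steps, and in both cases your version is the more explicit one. First, the paper asserts $\Pr(T=x\mid X_i=x)>0$ by appealing to Remark~\ref{lemmath} ``and using that the $i$th component is relevant''; your decomposition $T=\max\{\min(X_i,W),V\}$ together with the open box $B$ (components of the minimal path set $P\ni i$ pushed above $b$, components outside $P$ pushed below $a$, with minimality of $P$ guaranteeing $V<a$) is exactly the missing bookkeeping that makes that assertion rigorous. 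Second, the paper's final contradiction uses the continuity of the partial derivatives of $\widehat C$ to show that $\bar F_{T\mid X_i=a}$ is continuous except for a single jump at $t=a$, so it cannot also jump at $t=b$; you instead observe that the common law $\mu$ would carry an atom at every point of the uncountable interval $[a,b]$, which is impossible. Your finishing move is more elementary and, notably, does not actually use the continuity of the partial derivatives of $\widehat C$ at all (you invoke it only to fix canonical versions of the conditional laws), so your argument establishes the conclusion under marginally weaker hypotheses. The one point worth flagging, which affects the paper equally, is that the pointwise statements for every $x$ rely on working with the specific copula-based versions of the conditional distributions and on Remark~\ref{lemmath}; you acknowledge this, so there is no gap relative to the paper's own standard of rigor.
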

\begin{proof}
Let us assume that $m_i(a) = m_i(b)$ for two values $a,\,b\geq 0$ such that $a<b$. The general expression of the survival function of $\{T|X_i=x\}$ is given by
$$
\bar{F}_{T|X_i=x}(t)=\left\{\begin{array}{ccc}
\phi_{i,x}(t) & & 0\leq t < x,\\
\phi_{i,x}^{\star}(t) & & t\geq x,\\ 
\end{array}\right.
$$ 

\noindent where $\phi_{i,x}(t)$ and $\phi_{i,x}^{\star}(t)$ represent the survival functions of the system when the $i$th component is operative and no operative, respectively. Assume that $P_1,P_2,\ldots, P_r$ are the minimal path sets of the coherent system. Then% and denote $\mathcal{P}=P_1\cup\cdots \cup P_r$.Then
\begin{align*}
\phi_{i,x}(t)&= \displaystyle \sum_{j_1=1}^{r}\bar{F}_{X_{P_{j_1}}|X_i=x}(t)-\displaystyle \sum_{j_1<j_2}\bar{F}_{X_{P_{j_1}\cup P_{j_2}}|X_i=x}(t)+\ldots +(-1)^{r+1}\bar{F}_{X_{P_1\cup \dots \cup P_r}|X_i=x}(t)\nonumber\\             &= \displaystyle \sum_{j_1=1}^{r}\widehat{C}_i(\bar{F}_1(z_1^{j_1}),\ldots,\bar{F}_i(x),\ldots,\bar{F}_n(z_n^{j_1}))-\displaystyle \sum_{j_1<j_2}\widehat{C}_i(\bar{F}_1(z_1^{j_1,j_2}),\ldots,\bar{F}_i(x),\ldots,\bar{F}_n(z_n^{j_1,j_2}))\nonumber \\
              &\,\,\,\,\,\,+\ldots+(-1)^{r+1} \widehat{C}_i(\bar{F}_1(z_1^{1,2,\ldots,r}),\ldots,\bar{F}_i(x),\ldots,\bar{F}_n(z_n^{1,2,\ldots,r})),
\end{align*}
with $\widehat{C}_i=\partial_i \widehat{C}$ and
$$
z_k^{j_1,j_2,\ldots,j_q}=\left\{\begin{array}{ccc}
t & & k\in P_{j_1}\cup \cdots \cup P_{j_q},\\
0 & & k\notin P_{j_1}\cup \cdots \cup P_{j_q}\\
\end{array}\right.
$$ 
\noindent for all $k\in \{1,\ldots,n\}\setminus \{i\}$ and $q\in \{1,\ldots, r\}$. Note that $\phi_{i,x}(t)$ is a continuous function. Similarly, $\phi_{i,x}^{\star}(t)$ can be expressed as the function $\phi_{i,x}(t)$ but taking only the minimal path sets which do not contain the $i$th component. Thus, we conclude that $\bar{F}_{T|X_i=x}(t)$ is continuous for all $t\in[0,x)\cup(x,\infty)$. However, $\bar{F}_{T|X_i=x}(t)$ is not continuous for $t=x$. To prove that, we note that $\Pr(T=x|X_i=x)>0$ for all $x\geq 0$ holds, from Remark \ref{lemmath} and using that the $i$th component is relevant. On the other hand,  $\displaystyle \lim_{t\rightarrow x^{-}}\bar{F}_{T|X_i=x}(t)-\bar{F}_{T|X_i=x}(x)=\Pr(T=x|X_i=x)$. Therefore, $\bar{F}_{T|X_i=x}(t)$ has a discontinuity of the first kind at $t=x$. Finally, if $(X_1,\ldots,X_n)$ is CI, then $\bar{F}_{T|X_i=a}(t)\leq \bar{F}_{T|X_i=b}(t)$ for all $a\leq b$ and $t\geq 0$. In particular, if $m_i(a)=m_i(b)$ for some $a<b$, that is, $E(T|X_i=a)=E(T|X_i=b)$,  then  $\bar{F}_{T|X_i=a}(t)= \bar{F}_{T|X_i=b}(t)$ for all $t\geq 0$.  This fact means that, $\bar{F}_{T|X_i=a}$ posses two different discontinuity points at $t=a$ and $t=b$, which is not possible. 
\end{proof}
\end{Rem}

Now, we are able to provide a sufficient condition to compare $R_i^2$ and $R_j^2$. Firstly, we recall the definition of $S^{-}(f)$ the sign changes of a function $f$ on a subset $I\subseteq \IR$:
$$S^{-}(f)=\sup\{S^{-}(f(x_1),\ldots,f(x_k))\},$$
\noindent where $S^{-}(f(x_1),\ldots,f(x_k))$ denotes the sign changes of the indicated sequence, and the supremum is extended over all sets $x_1<\ldots<x_k$ such that $x_i\in I$ and $k\in \IN$.
\begin{The}
Let $T=\psi(X_1,\ldots,X_n)$ be the lifetime of a coherent system with $n$ components having the joint support $[0,\infty)^n$. Let us consider the quantile functions $F^{-1}_{m_i}$ and $F^{-1}_{m_j}$ of the random variables $m_i(X_i)$ and $m_j(X_j)$, respectively. If the random vector $(X_1,\ldots,X_n)$ is absolutely continuous and CI, with associated survival copula $\hat{C}$ having continuous partial derivatives and $S^{-}(F^{-1}_{m_i}-F^{-1}_{m_j})=1$ with sign sequence $+,\,-$ for $i,j\in \{1,2,\ldots,n\}$ with $i\neq j$, then 
\begin{equation*}
 m_i(X_i)\le_{cx} m_j(X_j).
\end{equation*}
In particular, $R_i^2\leq R_j^2$ holds.
\end{The}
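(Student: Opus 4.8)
The plan is to reduce the statement to a classical comparison result for the convex order via a single-crossing (cut) criterion between the distributions of $m_i(X_i)$ and $m_j(X_j)$. The key observation is that convex order is equivalent to equality of means together with a sign-change condition on the difference of quantile functions (or, dually, of distribution functions): if $E[m_i(X_i)] = E[m_j(X_j)]$ and $F^{-1}_{m_i} - F^{-1}_{m_j}$ changes sign exactly once, from $+$ to $-$, then $m_i(X_i) \le_{cx} m_j(X_j)$. This is a standard characterization (see, e.g., Shaked and Shanthikumar \cite{Sh1}, the Karlin--Novikov cut criterion). So the first step is to verify the equality of expectations; the second is to invoke the cut criterion with the hypothesis $S^{-}(F^{-1}_{m_i}-F^{-1}_{m_j})=1$ with sign sequence $+,-$; the third is to translate $m_i(X_i)\le_{cx}m_j(X_j)$ into $R_i^2\le R_j^2$ using \eqref{rii}.

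First I would establish that $E[m_i(X_i)] = E[T] = E[m_j(X_j)]$. This is immediate from the tower property: $E[m_i(X_i)] = E[E[T|X_i]] = E[T]$, and likewise for $j$. So both random variables $m_i(X_i)$ and $m_j(X_j)$ share the common mean $E[T]$, which is exactly the normalization needed for the cut criterion to upgrade a one-crossing condition into the convex order. Note that here is where the CI hypothesis and the continuity of the partial derivatives of $\hat C$ enter only implicitly: they guarantee (via Theorem \ref{mainTH}) that $m_i$ and $m_j$ are strictly increasing, hence invertible, so that the quantile functions $F^{-1}_{m_i}$, $F^{-1}_{m_j}$ are genuinely the inverses of $m_i\circ F_i^{-1}$, $m_j\circ F_j^{-1}$ and the hypothesis $S^{-}(F^{-1}_{m_i}-F^{-1}_{m_j})=1$ is a meaningful, well-posed condition. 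Without strict monotonicity the quantile functions could have flat pieces and the single-crossing statement would be ambiguous.

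Next I would apply the cut criterion. Since $m_i(X_i)$ and $m_j(X_j)$ have equal means and the difference of their quantile functions $F^{-1}_{m_i} - F^{-1}_{m_j}$ has exactly one sign change, with sign pattern $+$ then $-$, the Karlin--Novikov-type result gives $m_i(X_i) \le_{cx} m_j(X_j)$ directly. (One should note the convention: a single crossing of the quantile functions from $+$ to $-$, combined with equal means, forces the smaller-in-convex-order variable to be the one whose quantile function starts above and ends below — this matches the labeling in the statement.) Finally, by definition $x\mapsto x^2$ is convex, so $m_i(X_i)\le_{cx}m_j(X_j)$ yields $E[(m_i(X_i))^2]\le E[(m_j(X_j))^2]$; dividing by $Var(T)>0$ and using $R_k^2 = Var(m_k(X_k))/Var(T) = (E[(m_k(X_k))^2] - (E[T])^2)/Var(T)$, together with the equality of the means established in the first step, gives $R_i^2\le R_j^2$.

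The main obstacle is bookkeeping rather than depth: one must be careful that the hypothesis is stated in terms of the quantile functions of $m_i(X_i)$ and $m_j(X_j)$ (not of $X_i,X_j$ themselves), and that strict monotonicity of the $m_k$ — supplied by Theorem \ref{mainTH} under the CI and continuous-partial-derivative assumptions — is what makes these quantile functions well behaved and the single-crossing hypothesis unambiguous. I would also double-check the sign-orientation convention in the cut criterion so that the $+,-$ pattern indeed corresponds to $m_i(X_i)$ (and not $m_j(X_j)$) being the smaller element in the convex order; getting this backwards would reverse the conclusion. Beyond that, the argument is a clean three-line reduction to a textbook fact.
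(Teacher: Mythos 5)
Your proposal is correct and follows essentially the same route as the paper's proof: equal means via the tower property, strict monotonicity of $m_i$ and $m_j$ supplied by Theorem \ref{mainTH} under the CI and continuity assumptions, the single-crossing (cut) criterion to obtain $m_i(X_i)\leq_{cx} m_j(X_j)$, and comparison of second moments to conclude $R_i^2\leq R_j^2$. The only cosmetic difference is that the paper first translates the crossing of the quantile functions into a crossing of the distribution functions before invoking Theorem 3.A.44 of \cite{Sh1}, whereas you apply the quantile form of the criterion directly.
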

\begin{proof}
Under the assumptions of the theorem and from Theorem \ref{mainTH}, we get that $m_i(x)$ and $m_j(x)$ are continuous and strictly increasing for all $x\geq 0$. Then, we can express the quantile functions of $m_i(X_i)$ and $m_j(X_j)$ as $F^{-1}_{m_i}(p) = m_i(F_i^{-1}(p))$ and $F^{-1}_{m_j}(p) = m_j(F_j^{-1}(p))$, respectively,  where $F^{-1}_{i}$ and $F^{-1}_{j}$ are the quantile functions of $X_i$ and $X_j$. If $S^{-}(F^{-1}_{m_i}-F^{-1}_{m_j})=1$ with sign sequence $+,\,-$, then $S^{-}(F_{m_j}-F_{m_i})=1$ with sign sequence $+,\,-$. Taking into account that $E[m_i(X_i)]=E[m_j(X_j)]=E[T]$, then from Theorem 3.A.44 in \cite{Sh1} we conclude that $m_i(X_i)\leq_{cx}m_j(X_j)$.  

\end{proof}

Before provide our next result, we first recall the definition of the concordance order (see Definition 2.8.1 in \cite{Ne99}).

\begin{Def}
	Given two copulas $C$ and $C^{\prime },$ we say that $C$ is smaller than $%
	C^{\prime }$ in the concordance order (denoted by $C\prec C^{\prime })$ if $%
	C\left( u,v\right) \leq C^{\prime }\left( u,v\right) $ for all   $u,v \in \left( 0,1\right)$.
\end{Def}
	
\begin{The}
\label{th_concord}
 Let $T=\psi\left(X_{1}, \ldots, X_{n}\right)$ be the lifetime of a coherent system based on $n$ absolutely continuous components with common distribution function $F$ and joint support $[0,\infty)^n$. Assume that the vector $(X_{1}, \ldots, X_{n} )$ is CI and denote by $C^{(k)}$ the copula of the vector $(T,X_k), \ k=1,...,n.$ Then, $C^{(i)}\prec C^{(j)}$ implies \begin{equation}\label{convex}
 m_i(X_i)\le_{cx} m_j(X_j) 
 \end{equation}
 for $i,j\in \{1,2,\ldots,n\}$ with $i\neq j.$ In particular, $R_i^2\leq R_j^2$ holds.
\end{The}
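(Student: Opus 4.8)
The plan is to establish the convex order $m_i(X_i)\leq_{cx} m_j(X_j)$ and then read off $R_i^2\leq R_j^2$ from the fact that a convex order between two random variables forces the ordering of their variances (recall $R_k^2 = Var(m_k(X_k))/Var(T)$). Since $E[m_i(X_i)]=E[T]=E[m_j(X_j)]$ by the tower property, the convex order is equivalent, for these variables, to the increasing convex (stop-loss) order, so it is enough to check that $E[(m_i(X_i)-a)_+]\leq E[(m_j(X_j)-a)_+]$ for every $a\in\IR$; the case $a\leq 0$ is immediate since $m_k(X_k)\geq 0$, so we may fix $a>0$.

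The core of the argument is to convert a stop-loss transform into a tail functional of the joint law of $(T,X_i)$ by using the monotonicity of the regression curve, and then let the concordance hypothesis act. Since $(X_1,\dots,X_n)$ is CI, Theorem~\ref{mi_increasing} gives that $m_i$ is nondecreasing; hence, $X_i$ being absolutely continuous with support $[0,\infty)$, the event $\{m_i(X_i)>a\}$ coincides, up to a null set, with an upper set $\{X_i>c\}$ for a suitable threshold $c=c(a)$ (with the obvious conventions when $a$ falls outside the range of $m_i$). Conditioning on $X_i$ and using $m_i(X_i)=E[T\mid X_i]$ then yields, via Tonelli applied to $T=\int_0^\infty\mathbf{1}_{\{T>s\}}\,ds$,
\[
E[(m_i(X_i)-a)_+]=E\big[(T-a)\,\mathbf{1}_{\{X_i>c\}}\big]=\int_0^\infty \Pr(T>s,\,X_i>c)\,ds-a\,\bar F(c).
\]
Now $C^{(i)}\prec C^{(j)}$, together with the facts that $T$ has the same marginal law in $(T,X_i)$ and $(T,X_j)$ and that $X_i,X_j$ share the common marginal $F$, is exactly the statement that $\Pr(T>s,X_i>c)\leq \Pr(T>s,X_j>c)$ for all $s,c\geq 0$ (the concordance order passes from the joint distribution functions to the joint survival functions, i.e.\ the survival copulas are concordance ordered too), while $\bar F(c)=\Pr(X_i>c)=\Pr(X_j>c)$ leaves the subtracted term unchanged. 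Replacing $X_i$ by $X_j$ in the display therefore does not decrease its value, and the elementary chain $E[(T-a)\mathbf{1}_{\{X_j>c\}}]=E[(m_j(X_j)-a)\mathbf{1}_{\{X_j>c\}}]\leq E[(m_j(X_j)-a)_+]$ (using $m_j(X_j)=E[T\mid X_j]$ and then the pointwise bound $(y-a)\mathbf{1}_A\leq(y-a)_+$) delivers the required stop-loss inequality. Combined with the case $a\leq 0$ and the equality of means, the standard characterization of the convex order (see \cite{Sh1}) gives $m_i(X_i)\leq_{cx} m_j(X_j)$, and hence $R_i^2\leq R_j^2$.

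The only delicate point is the reduction $\{m_i(X_i)>a\}=\{X_i>c\}$: this is where the monotonicity of $m_i$ — and therefore the CI assumption, through Theorem~\ref{mi_increasing} — is genuinely used, and it is also where the hypothesis of a \emph{common} marginal $F$ becomes indispensable, for it is precisely what allows $\{X_i>c\}$ to be matched against the equiprobable event $\{X_j>c\}$ so that $C^{(i)}\prec C^{(j)}$ can be invoked; by contrast, the monotonicity of $m_j$ plays no role, and everything else (Tonelli for the tail representation, the survival-copula reformulation of $\prec$, the passage from the increasing convex order plus equal means to the convex order) is routine. An alternative would be to deduce from $C^{(i)}\prec C^{(j)}$ the single-crossing condition on quantile functions used in the previous theorem, but that route seems to force the extra assumption that the survival copula $\widehat C$ has continuous partial derivatives, needed there to get strict monotonicity of $m_i$, which the present statement does not assume.
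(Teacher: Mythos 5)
Your proof is correct, and at its core it runs on the same engine as the paper's: both arguments reduce the convex order between $m_i(X_i)$ and $m_j(X_j)$ (which share the mean $E[T]$) to a one-parameter family of upper-tail comparisons of the joint laws of $(T,X_i)$ and $(T,X_j)$, and both close that comparison by letting $C^{(i)}\prec C^{(j)}$ act on joint distributions with identical marginals. The difference is in the characterization of the convex order you use, and it is not merely cosmetic. The paper goes through integrated quantile functions, writing $F^{-1}_{m_k}(p)=m_k(F^{-1}(p))$ for $k=i,j$; this requires $m_k$ to be strictly increasing, for which it invokes Theorem \ref{mainTH} — whose extra hypothesis that $\widehat C$ has continuous partial derivatives is not actually part of the statement of Theorem \ref{th_concord}. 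Your stop-loss route only needs $\{m_i(X_i)>a\}$ to coincide a.s.\ with an upper set $\{X_i>c\}$, for which the nondecreasingness of $m_i$ supplied by Theorem \ref{mi_increasing} (i.e.\ CI alone) suffices, and you use no monotonicity of $m_j$ at all, replacing it with the pointwise bound $(y-a)\mathbf{1}_A\le (y-a)_+$. The pivotal inequality is the same in both proofs — $E[T\,\mathbf{1}_{\{X_i>c\}}]\le E[T\,\mathbf{1}_{\{X_j>c\}}]$, which in the paper appears as $E[T\mid X_i\ge F^{-1}(p)]\le E[T\mid X_j\ge F^{-1}(p)]$ via the distortion functions $h_k$ — and your derivation of it from the concordance order plus the common marginals is the same computation. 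Net effect: your variant proves the theorem under exactly its stated hypotheses, whereas the paper's proof implicitly imports the smoothness assumption of Theorem \ref{mainTH}; the paper's version, in exchange, exhibits the explicit distortion representation $h_k$, which is of independent interest.
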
	
\begin{proof}
	Let $F_{m_k}$ be the distribution function of the random variable  $m_k(X_k)=E\left[T \mid X_{k}\right],$ $k=1,...,n.$ By using Theorem \ref{mainTH}, it follows that the function $m_k$ is strictly increasing,  $k=1,...,n.$ Then, we have
		$$F_{m_k}(t) = P\left[ m_{k}(X_k) \leq t \right] = P\left[ X_{k} \leq m_k^{-1} (t) \right] = 
	F(m^{-1}_k (t)), \ \forall t.$$
The quantile function of $m_{k}(X_k)$  is given by	
	$$F_{m_k}^{-1}(p) =m_k(F^{-1}(p) )= E\left[ T \mid X_{k} =F^{-1}(p)
\right], \ p\in(0,1).$$
Note that $E[m_k(X_k)]=E[T]$ for $k=1,...,n.$  Given $i,j$ such that $i,j\in \{1,2,\ldots,n\}$ with $i\neq j,$ it follows from  Theorem 3.5 in Shaked and Shantikhumar (2007) that \eqref{convex} is equivalent to 	
	$$\int_{p}^{1} F^{-1}_{m_i}(x) \ dx \leq 
	\int_{p}^{1} F^{-1}_{m_j}(x)  \ dx \,\,\mbox{for all}\,\,  p \in (0,1),$$
or equivalently,
$$\int_{p}^{1} E\left[ T \mid X_{i} =F^{-1}(x)\right]   \ dx \leq 
	\int_{p}^{1}  E\left[ T \mid X_{j} =F^{-1}(x)\right]  \ dx \,\,\mbox{for all}\,\,  p \in (0,1),$$
that is,
\begin{equation} \label{lexp}
E\left[ T \mid X_{i} \ge F^{-1}(p)\right] \le E\left[ T \mid X_{j} \ge F^{-1}(p)\right] \,\,\mbox{for all}\,\,  p \in (0,1).
\end{equation}
Denote by $\bar{F}_T(t)$ the survival function of the system $T.$ Given $0<p<1$ and $k=1,\ldots,n,$ we can write
\begin{equation}
\label{hkf}
E\left[ T \mid X_{k} \ge F^{-1}(p)\right]=\int_{0}^{1} P[T >t \mid X_k \ge F^{-1}(p) ] dt = \int_{0}^{1}h_k(\bar{F}_T(t))dt,  
\end{equation}
%\begin{eqnarray}
%E\left[ T \mid X_{k} \ge F^{-1}(p)\right]&=&\int_{0}^{1} P[T >t \mid X_k \ge F^{-1}(p) ] dt  \nonumber \\ 
%&=& \int_{0}^{1}h_k(\bar{F}_T(t))dt,  \label{hkf}
%\end{eqnarray}
where
\begin{equation*}\label{distk}
h_k(u)=\frac{u-p+C^{(k)}(1-u, p)}{1-p}, \ 0\le u \le 1,
\end{equation*}
is a distortion function. It is clear that $C^{(i)}\prec C^{(j)}$ implies $h_i(u) \le h_j(u)$ for all $u \in (0,1)$. From this fact and \eqref{hkf}, we see that $C^{(i)}\prec C^{(j)}$  implies \eqref{lexp}, which is the same as \eqref{convex}.
\end{proof}

%########################################################################################################
\begin{Exa}
Let us consider a series system with two independent and exponentially distributed components,  $X_1\sim Exp(\lambda_1)$ and $X_2\sim Exp(\lambda_2)$. Firstly, we calculate the survival copula associated to the vector $(T, X_1)$ where $T = \min(X_1,X_2)$.

%Consideramos un sistema en serie con 2 componentes independientes y tiempos de vida $X_1\sim Exp(\lambda_1)$ y $X_2\sim Exp(\lambda_2)$. Calculemos la cópula de supervivencia asociada al vector $(X_1, T)$ donde $T = \min(X_1,X_2)$. 
\begin{align*}
\widehat{C}^{(1)}(u,v)&= \Pr[T>F^{-1}_T(1-u), X_1>F_1^{-1}(1-v)] = \\
            &= \Pr[\min(X_1,X_2)>F^{-1}_T(1-u), X_1>F_1^{-1}(1-v)]\\
            &= \Pr[X_1>\max(F_1^{-1}(1-v),F^{-1}_T(1-u)), X_2>F^{-1}_T(1-u)]\\
            &=\left\{\begin{array}{lcc}
                v\,\bar{F}_2(F^{-1}_T(1-u)) &\mbox{if}& F_1^{-1}(1-v)\geq F^{-1}_T(1-u),  \\
%                &&\\
                \bar{F}_1(F^{-1}_T(1-u))\,\bar{F}_2(F^{-1}_T(1-u)) &\mbox{if}& F_1^{-1}(1-v)< F^{-1}_T(1-u).  \\
              \end{array}\right.  
\end{align*}

Taking into account that $\bar{F}_T(t)=e^{-(\lambda_1+\lambda_2)\,t}$, $F_1^{-1}(1-v) = \dfrac{-1}{\lambda_1} \log(v)$, $F_T^{-1}(1-u) = \dfrac{-1}{\lambda_1+\lambda_2} \log(u)$ and $\bar{F}_i(t)=e^{-\lambda_i\,t}\,\, \mbox{for}\,\, i=1,2.$ We obtain that 
$$\widehat{C}^{(1)}(u,v) =\left\{\begin{array}{ccc}
                v\,u^{\frac{\lambda_2}{\lambda_1+\lambda_2}} &\mbox{if}& v\leq u^{\frac{\lambda_1}{\lambda_1+\lambda_2}},  \\
%                &&\\
                u &\mbox{if}& v> u^{\frac{\lambda_1}{\lambda_1+\lambda_2}}.  \\
              \end{array}\right.$$
              
Similarly, we calculate the survival copula for the vector $(T, X_2)$,
%Análogamente, se calcula la cópula de supervivencia para el vector $(X_2, T)$, obteníendose así: 
$$\widehat{C}^{(2)}(u,v) =\left\{\begin{array}{ccc}
                v\,u^{\frac{\lambda_1}{\lambda_1+\lambda_2}} &\mbox{if}& v\leq u^{\frac{\lambda_2}{\lambda_1+\lambda_2}},  \\
%                &&\\
                u &\mbox{if}& v> u^{\frac{\lambda_2}{\lambda_1+\lambda_2}}.  \\
              \end{array}\right.$$

We study now the sign of $\hat{C}^{(2)}(u,v)-\hat{C}^{(1)}(u,v)$ assuming that  $\lambda_1< \lambda_2$.\\

\textbf{Case I:} $u^{\frac{\lambda_2}{\lambda_1+\lambda_2}}<u^{\frac{\lambda_1}{\lambda_1+\lambda_2}}<v$.\\

$\hat{C}^{(2)}(u,v)-\hat{C}^{(1)}(u,v) = u-u = 0$.\\

\textbf{Case II:} $u^{\frac{\lambda_2}{\lambda_1+\lambda_2}}<v\leq u^{\frac{\lambda_1}{\lambda_1+\lambda_2}}$.\\

$\hat{C}^{(2)}(u,v)-\hat{C}^{(1)}(u,v) = u-vu^{\frac{\lambda_2}{\lambda_1+\lambda_2}} = u^{\frac{\lambda_2}{\lambda_1+\lambda_2}}(u^{\frac{\lambda_1}{\lambda_1+\lambda_2}}-v)\geq 0$.\\

\textbf{Case III:} $v \leq u^{\frac{\lambda_2}{\lambda_1+\lambda_2}}<u^{\frac{\lambda_1}{\lambda_1+\lambda_2}}$.\\

$\hat{C}^{(2)}(u,v)-\hat{C}^{(1)}(u,v) = vu^{\frac{\lambda_1}{\lambda_1+\lambda_2}}-vu^{\frac{\lambda_2}{\lambda_1+\lambda_2}} = v(u^{\frac{\lambda_1}{\lambda_1+\lambda_2}}-u^{\frac{\lambda_2}{\lambda_1+\lambda_2}}) \geq  0$.\\

We conclude that $\hat{C}^{(2)}(u,v)-\hat{C}^{(1)}(u,v)\geq 0$ for all $u,v\in (0,1)$, i.e., $\hat{C}^{(1)}\prec \hat{C}^{(2)}$ and, therefore, from Theorem \ref{th_concord} we get that $R_1^2\leq R_2^2$ for all $\lambda_1<\lambda_2$. Note that this conclusion was reached in Remark \ref{obsR2seriesystem}. However, here we have also proved that $m_1(X_1)\le_{cx} m_2(X_2), $ which is a stronger ordering between $m_1(X_1)$ and $m_2(X_2)$.
  
\end{Exa}
%########################################################################################################

Now we focus on the bivariate random vector $(T,X_k),$ for $k=1,...,n$. Let $\boldsymbol{I}_k$ the random vector defined by
\begin{equation}\label{DSM}
\boldsymbol{I}_k=(i, j) \quad \text { whenever } \quad T=X_{i: n} \quad \text { and } \quad X_k=X_{j: n}.
\end{equation}
The bivariate probability mass function of $\boldsymbol{I}_k$ is denoted by $p^k_{i, j}=P\left[\boldsymbol{I}_k=(i, j)\right],$ for $i, j=$ $1, \ldots, n$ (of  course, $\sum_{i=1}^{n} \sum_{j=1}^{n} p^k_{i, j}=1$). The matrix $\boldsymbol{P}_k=\left(p^k_{i, j}\right)$ is called the bivariate signature matrix associated with $\left(T, X_k\right)$, see \cite{Navarroetal2013}. It can be shown that
$$
p^k_{i, j}=\frac{\left|A^k_{i, j}\right|}{n !},
$$
where $\left|A^k_{i, j}\right|$ is the cardinality of the set
$$
A^k_{i, j}=\left\{\sigma \in \mathscr{P}_{n}: T=X_{i: n} \text { and } X_k=X_{j: n} \text { whenever } X_{\sigma(1)}<\cdots<X_{\sigma(n)}\right\}
$$
and $\mathscr{P}_{n}$ is the set of permutations of the set $\{1, \ldots, n\}.$ 

The following result shows that for a coherent system $T$ with independent and identically distributed components $X_1,...,X_n,$ the stochastic ordering (as discrete distributions) of the bivariate signatures $\boldsymbol{I}_i$ and $\boldsymbol{I}_j,$ associated to $(T,X_i)$ and $(T,X_j)$ respectively, is a sufficient condition for \eqref{convex}.
\begin{Cor}
\label{iid}
 Let $T=\psi\left(X_{1}, \ldots, X_{n}\right)$ be the lifetime of a coherent system based on $n$ absolutely continuous i.i.d. component lifetimes with a common distribution function $F.$  Let $\boldsymbol{I}_k$ be the random vector defined by \eqref{DSM} for $\left(T, X_k\right)$, $1\le k\le n. $ If $\boldsymbol{I}_i \leq_{st} \boldsymbol{I}_j$, then $m_i(X_i)\leq_{cx} m_j(X_j)$.
%$ E\left[T \mid X_{i}\right]\le_{cx} E\left[T \mid X_{j}\right].$
\end{Cor}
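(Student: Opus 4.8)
The plan is to reduce Corollary \ref{iid} to Theorem \ref{th_concord} by showing that $\boldsymbol{I}_i\le_{st}\boldsymbol{I}_j$ forces the concordance ordering $C^{(i)}\prec C^{(j)}$ of the copulas of $(T,X_i)$ and $(T,X_j)$; once this is established, the conclusion $m_i(X_i)\le_{cx}m_j(X_j)$ (and hence $R_i^2\le R_j^2$) is immediate from that theorem. Note that Theorem \ref{th_concord} does apply in this setting: i.i.d.\ components have the product copula, which is MTP$_2$ and therefore CI, and whose survival copula has continuous partial derivatives.

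The first step is a mixture representation of the joint survival function of $(T,X_k)$ in terms of the bivariate signature. Because the components are i.i.d.\ and absolutely continuous, there are a.s.\ no ties, the vector of order statistics $(X_{1:n},\dots,X_{n:n})$ is independent of the permutation $\sigma$ with $X_{\sigma(1)}<\dots<X_{\sigma(n)}$, and on the event $\{\boldsymbol{I}_k=(a,b)\}$ one has $T=X_{a:n}$ and $X_k=X_{b:n}$ by \eqref{DSM}. Conditioning on $\boldsymbol{I}_k$ therefore gives
\[
\bar H_k(t,s):=\Pr(T>t,\,X_k>s)=\sum_{a=1}^{n}\sum_{b=1}^{n} p^{k}_{a,b}\,\Pr(X_{a:n}>t,\,X_{b:n}>s)=E\!\left[\Phi_{t,s}(\boldsymbol{I}_k)\right],
\]
where $\Phi_{t,s}(a,b):=\Pr(X_{a:n}>t,\,X_{b:n}>s)$ and the inner probabilities do not depend on $k$.

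The key observation is that $\Phi_{t,s}$ is an increasing function on $\{1,\dots,n\}^2$ with the componentwise order: if $a\le a'$ then $X_{a:n}\le X_{a':n}$ holds pointwise, so $\{X_{a:n}>t\}\subseteq\{X_{a':n}>t\}$ and hence $\Phi_{t,s}(a,b)\le\Phi_{t,s}(a',b)$; the same argument applies in the second coordinate. Consequently $\boldsymbol{I}_i\le_{st}\boldsymbol{I}_j$ yields $\bar H_i(t,s)\le\bar H_j(t,s)$ for all $t,s\ge0$. Since the components are i.i.d., the pairs $(T,X_i)$ and $(T,X_j)$ have the same marginals, so writing $\bar H_k=1-F_T-F+H_k$ with $H_k(t,s)=\Pr(T\le t,\,X_k\le s)$ the marginal terms cancel and $\bar H_i\le\bar H_j$ is equivalent to $H_i\le H_j$, i.e.\ to $C^{(i)}\prec C^{(j)}$ (using that $F$ and $F_T$ are continuous, so $(F_T(t),F(s))$ sweeps out $(0,1)^2$). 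Theorem \ref{th_concord} then gives $m_i(X_i)\le_{cx}m_j(X_j)$.

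The only genuinely delicate point is the mixture representation of $\bar H_k$: one must check measurability of the rank partition, invoke the exchangeability fact that the conditional law of $(X_{1:n},\dots,X_{n:n})$ given which permutation sorts the sample is the same on every block $A^k_{a,b}$ (so that the order statistics are independent of $\boldsymbol{I}_k$), and use absolute continuity to dispose of ties. Once this is in hand, the monotonicity of $\Phi_{t,s}$ and the appeal to Theorem \ref{th_concord} are routine.
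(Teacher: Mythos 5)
Your argument is correct and follows the same skeleton as the paper's proof: establish the concordance ordering $C^{(i)}\prec C^{(j)}$ of the copulas of $(T,X_i)$ and $(T,X_j)$ and then invoke Theorem \ref{th_concord} (whose hypotheses you rightly verify: the product copula of i.i.d.\ components is MTP$_2$, hence CI, with smooth partial derivatives). Where you diverge is in how that concordance ordering is obtained. The paper simply cites Theorem 3.1 of \cite{Navarroetal2013} to get the full bivariate ordering $(T,X_i)\le_{st}(T,X_j)$ and then passes to the copulas; you instead re-derive the underlying mixture representation $\Pr(T>t,X_k>s)=\sum_{a,b}p^k_{a,b}\Pr(X_{a:n}>t,X_{b:n}>s)$ from the independence of the order statistics and the rank vector, and extract only the upper-orthant comparison $\Pr(T>t,X_i>s)\le\Pr(T>t,X_j>s)$ from the componentwise monotonicity of $(a,b)\mapsto\Pr(X_{a:n}>t,X_{b:n}>s)$. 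This buys two things. First, the argument is self-contained, at the cost of reproving what is essentially the representation result of \cite{Navarroetal2013}. Second, and more substantively, it lands directly on the inequality $C^{(i)}(u,v)\le C^{(j)}(u,v)$ in exactly the form that Theorem \ref{th_concord} consumes; the paper's printed proof instead records the lower-orthant consequence $\Pr(T\le t,X_i\le s)\ge\Pr(T\le t,X_j\le s)$ of the $\le_{st}$ ordering and asserts it is ``the same as'' $C^{(i)}\prec C^{(j)}$, which, under the paper's own definition of $\prec$, is the reverse inequality --- your route sidesteps that sign slip. One remark worth making, which applies equally to both proofs: since $\boldsymbol{I}_i$ and $\boldsymbol{I}_j$ always share the same marginals (the system signature and the uniform rank distribution), the hypothesis $\boldsymbol{I}_i\le_{st}\boldsymbol{I}_j$ in fact forces $\boldsymbol{I}_i=_{st}\boldsymbol{I}_j$, so the corollary's conclusion is really an equality in the convex order; this does not affect the validity of your argument, but it is a limitation of the statement you might have flagged.
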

\begin{proof}
Under the assumptions, it follows from Theorem 3.1 in \cite{Navarroetal2013} that $(T,X_i)\le_{st}(T,X_j)$. This implies (see \cite{Sh1}, p. 308) that 
$$
\Pr\left(T\leq t, X_i \leq s\right) \geq \Pr\left(T \leq t, X_j \leq s\right)\text{ for all } s>0,t>0,
$$
which is the same as $C^{(i)}\prec C^{(j)}$ by using that $(T,X_i)$ and $(T,X_j)$ have the same marginal distribution functions. Now the result follows by applying Theorem \ref{th_concord}.

\end{proof}

\begin{Rem}
	Corollary \ref{iid} can be extended to the case of a system with exchangeable components using Theorem 3.2 in \cite{Navarroetal2013} instead of Theorem 3.1 in \cite{Navarroetal2013}.
\end{Rem}

Navarro et al. (\cite{Navarroetal2013}, p. 1021-22) analyzed the conditions on two bivariate signatures $\boldsymbol{I}$ and $\boldsymbol{I^*}$ to have $\boldsymbol{I}\le_{st}\boldsymbol{I^*}$. In particular, if $\boldsymbol{P}_k$ and  $\boldsymbol{P}_{k'}$ are the signature matrices associated to $(T,X_k)$ and $(T,X_{k'}),$ respectively, the condition  $\boldsymbol{I}_k \leq_{st} \boldsymbol{I}_{k'}$ is equivalent to obtain $\boldsymbol{P}_{k'}$ from $\boldsymbol{P}_k$ through a finite sequence of transformations in which a positive mass $c>0$ is moved from the term $p_{i, j}^{k}$ to the term $p_{r, s}^{k'}$ with $r \geq i$ and $s \geq j$ (i.e. the new terms are $p_{i, j}^{k}-c$ and $p_{r, s}^{k'}+c$, respectively).

\section{\textcolor{blue}{Computation of the importance measure by simulation}}
\label{estimationIndex}

\textcolor{blue}{This section provides a procedure to approximate the importance measure by Monte Carlo simulation methods. Firstly, we would like to note that it is not easy to take samples of components' lifetimes when dependence exists among them. However, in the case of having lifetimes' data, we could estimate the distribution of the components, the corresponding copula and the regression curve $m_i(x)$ by the procedure described in \cite{Noh}. Observe that in our case the regression curve satisfies, under some assumptions, the monotonicity property required in \cite{Dette}. Therefore, we will assume that we know the quality of the components (distributions) and the dependence structure (copula) among them. Then, we simulate the lifetimes of the system components (as many as desired) to obtain an approximation of the variance of $T$ and $m_i(X_i)$, given that the expressions of $m_i(x)$ are known for all $i=1,\ldots,n$. If the functions $m_i(x)$ are not available, we can always approximate $m_i(x)$ by numerical integration or by using nearest-neighbor methods, see, for example, page 19 in \cite{Hastieetal2009}.} 
% posición tabla 1
%%%%%% TABLA 1
\begin{table}
{\rowcolors{2}{gray!20!white!50}{gray!70!white!30}
\begin{center}
\begin{tabular}{ |p{2.1cm}|p{2cm}|p{2cm}|p{3cm}| }
\hline
Sample size&  \hspace{0.75cm}$\hat{R}_1^2$  & \hspace{0.75cm}$R_1^2$ & Absolute Error\\
\hline 
\hspace{0.5cm}100  & 0.8459399   & 0.872727273 & 0.02678737\\
\hspace{0.5cm}500  & 0.8662224   & 0.872727273 & 0.006504873\\
\hspace{0.5cm}1000 & 0.869915 & 0.872727273 & 0.002812273\\
\hspace{0.5cm}1500   & 0.875313   & 0.872727273 & 0.002585685\\
\hspace{0.5cm}5000   &  0.8722335  & 0.872727273 & 0.0004937255\\
\hline
\end{tabular}
\end{center}}
\caption{\textcolor{blue}{Approximations} of $R_1^2$ in Example \ref{Example34}, depending on the sample size.}
\label{Tab01}   
\end{table}

\textcolor{blue}{To compare the approximated values of $R_i^2$ (denoted by $\hat{R}_i^2$) with the real ones, we consider the system studied in Example \ref{Example34}. If we assume that the components are independent and have standard exponential distributions, then we can generate lifetimes' data for each component with different sample sizes. Fixed a sample size, we can obtain an approximation for $R_1^2$ and $R_2^2$, just by using Monte Carlo method and the expressions in \eqref{m1Ex36} and \eqref{m2Ex36}, respectively. Tables \ref{Tab01} and \ref{Tab02} represent some examples of these approximations for $R_1^2$ and $R_2^2$, respectively. In general, the absolute error decreases as the sample size increases, as expected. We now analyse the distribution of the errors depending on the sample size. We define $E_i = R_i^2-\hat{R}_i^2$ as the approximation errors of the importance measure for the $i$th component with $i=1,2$. Fixed a sample size, we calculate 1000 approximations of the values $R_1^2$ and $R_2^2$, and we compute the corresponding errors. Figure \ref{fig02} represents the errors' distributions depending on the sample size. As we can see, there exists a lack of symmetry in the errors' distributions for the cases of small sample sizes and the errors' dispersion decreases as the sample size increases.} 
%%%%%% TABLA 2
\begin{table}
{\rowcolors{2}{gray!20!white!50}{gray!70!white!30}
\begin{center}
\begin{tabular}{ |p{2.1cm}|p{2cm}|p{2cm}|p{3cm}| }
\hline
Sample size& \hspace{0.75cm}$\hat{R}_2^2$ & \hspace{0.75cm}$R_2^2$ & Absolute Error\\
\hline
\hspace{0.7cm}100  & 0.03566282 & 0.02424242 & 0.0114204\\
\hspace{0.7cm}500  & 0.02462768 & 0.02424242 & 0.00038526\\
\hspace{0.7cm}1000 & 0.02334157 & 0.02424242 & 0.00090085\\
\hspace{0.7cm}1500   & 0.02451261 & 0.02424242 & 0.0002701891\\
\hspace{0.7cm}5000   & 0.02419679 & 0.02424242 & 0.0000456361\\
\hline
\end{tabular}
\end{center}}
\caption{\textcolor{blue}{Approximations} of $R_2^2$ in Example \ref{Example34}, depending on the sample size.}
\label{Tab02}   
\end{table}

\begin{figure}[h!]
	\begin{center}
		\includegraphics*[scale=0.3]{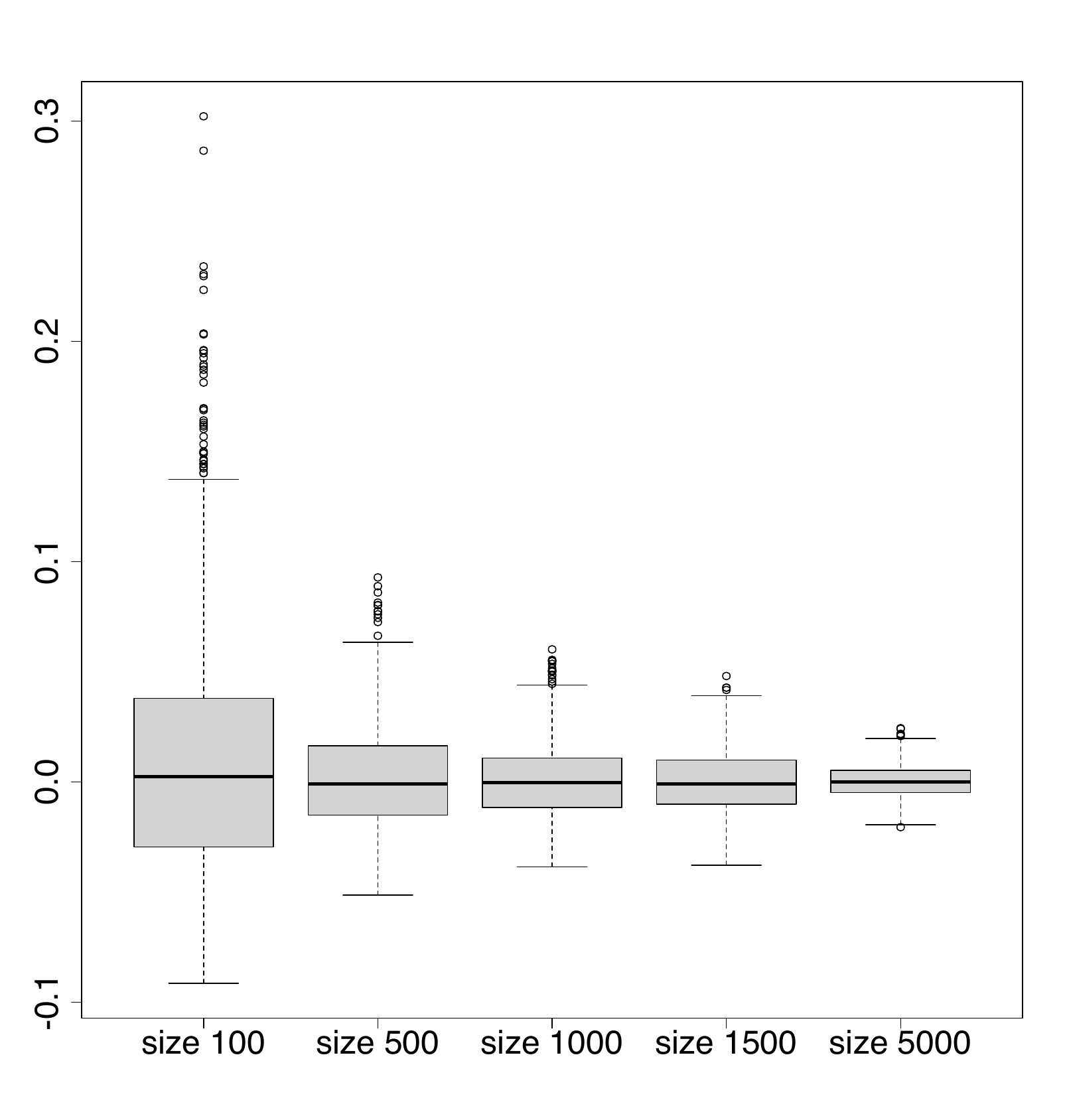}
		\includegraphics*[scale=0.3]{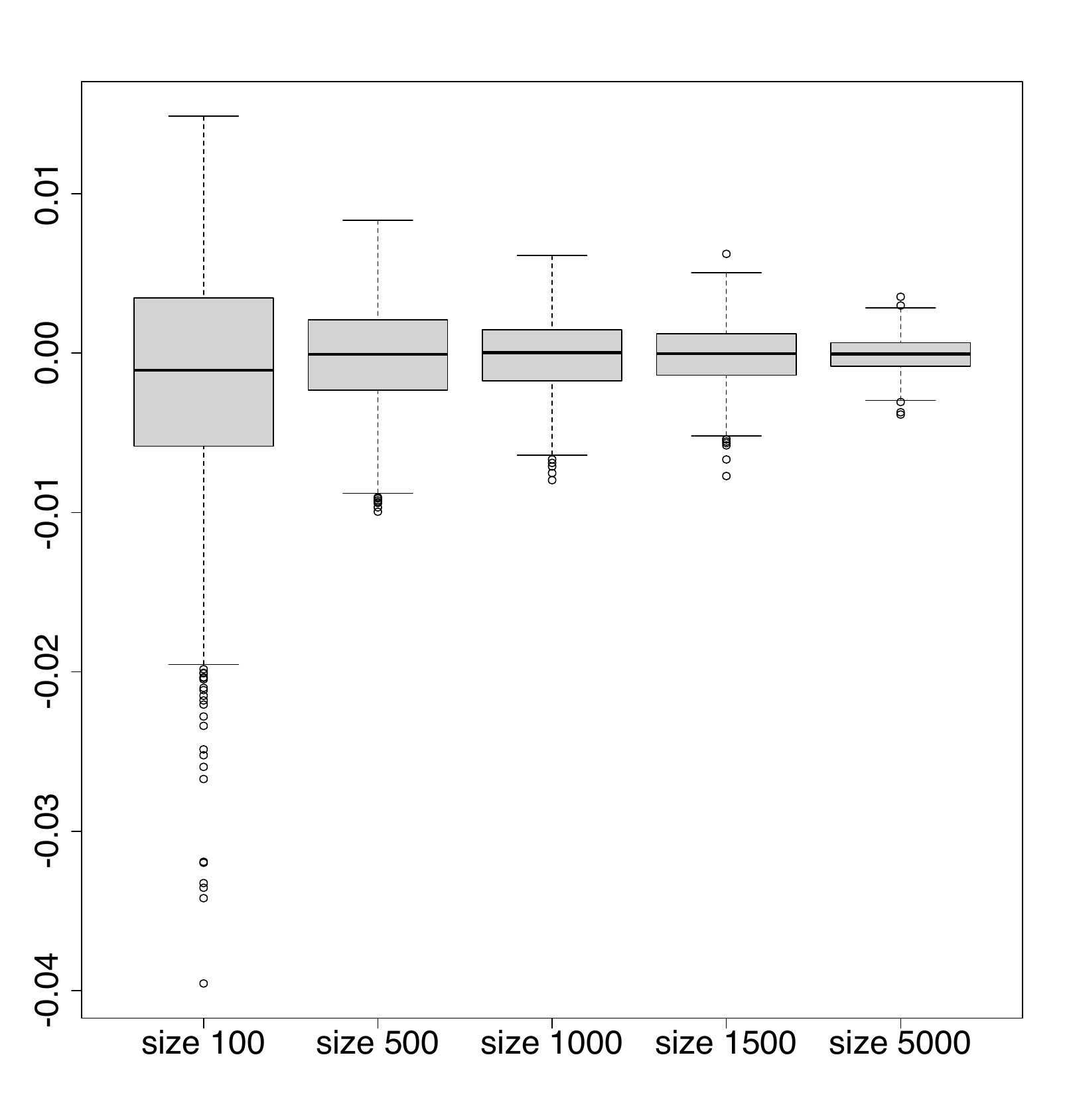}
		\caption{\textcolor{blue}{Plots of the errors' distributions associated to the \textcolor{blue}{approximations} of $R_1^2$ and $R_2^2$ (left and right, respectively) depending on the sample sizes.}} 
		\label{fig02}
	\end{center}
\end{figure}
\end{Exa}

\textcolor{blue}{As an applied example to illustrate the utility of the previous results, we consider a coherent system in the context of naval engineering. Specifically, we study a simplified model of a ship control system with 4 dependent and heterogeneous components. The structure of dependence will be modelled by the Farlie-Gumbel-Morgenstern (FGM) copula.} 
\begin{Exa}
\label{Example51}
\textcolor{blue}{The control system of a ship is carried out through the control panels of the servo (an automatic device that uses error-sensing negative feedback to correct the action of a mechanism). The lifetime associated with the considered system is given by $T = \max(X_1, \min(X_2, X_3), \min(X_2, X_4))$, where the random variable $X_1$ represents the lifetime of the manual control valves located in the engine room, $X_2$ is the lifetime of the electric motor with local control and, finally, $X_3$ and $X_4$ are the lifetimes of the bridge control panel (recall that the bridge of a ship is the room or platform from which the ship can be commanded) and the machine control panel, respectively. In this situation, it is usual to assume that the components are slightly dependent because they share the same marine environment. We will represent the dependence relationship among the components by the following FGM copula:}
\textcolor{blue}{\begin{equation}
\label{CopFGMdim4}
C(u_1,u_2,u_3,u_4)=u_1u_2u_3u_4[1+\theta\,(1-u_1)(1-u_2)(1-u_3)(1-u_4)],
\end{equation}}
\noindent \textcolor{blue}{where the parameter $\theta \in [-1,1]$. Observe that the case of independent components is obtained just taking $\theta=0$. Let us consider that the components' lifetimes are modelled by Weibull distributions. Hence, we study the system under two scenarios:}

\begin{figure}[t!]
	\begin{center}
		\includegraphics*[scale=0.3]{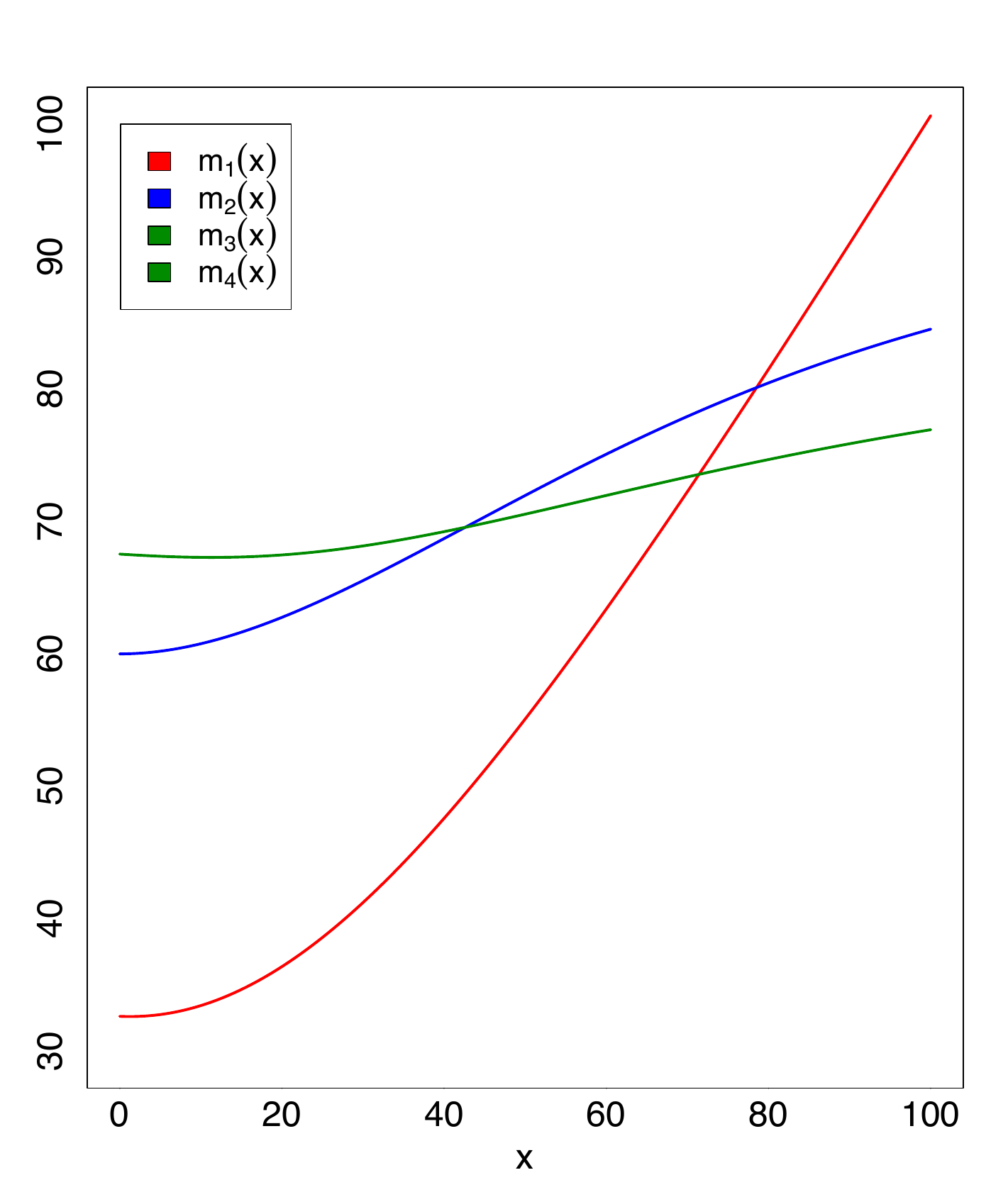}
        \includegraphics*[scale=0.3]{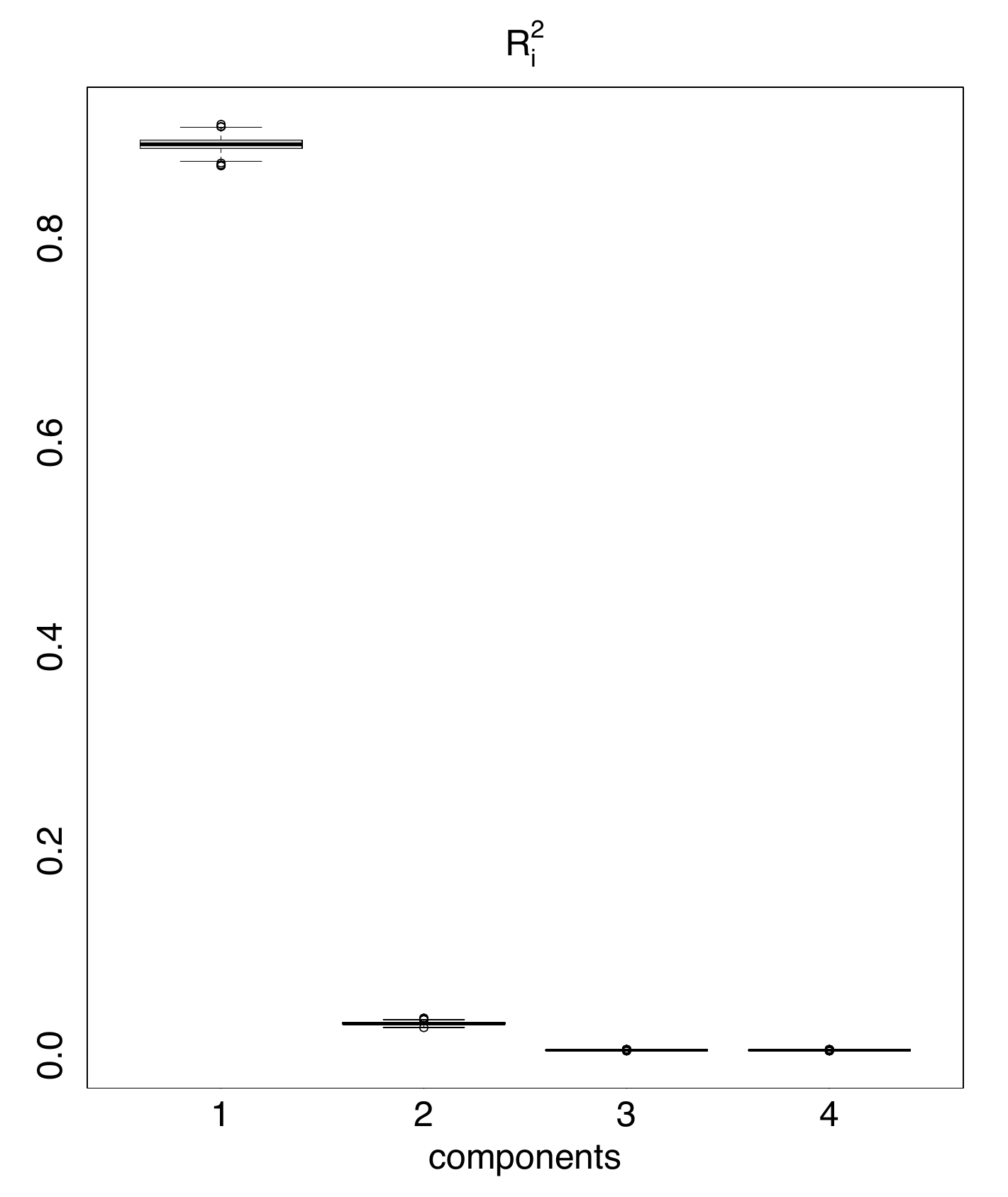}
		\caption{\textcolor{blue}{Plots of $m_1(x)$, $m_2(x)$, $m_3(x)$ and $m_4(x)$ in the case $\theta = 1$ (left). Distributions of the values $\hat{R}_1^2$, $\hat{R}_2^2$, $\hat{R}_3^2$ and $\hat{R}_4^2$ from 1000 approximations in the case $\theta = 1$ (right).}} \label{fig_Cexmple01}
	\end{center}
\end{figure}

\textcolor{blue}{\textbf{Case I:} take $\theta > 0$ in \eqref{CopFGMdim4} (positive dependence) and assume components with constant failure rate functions. Specifically, the components' reliability functions are $\bar{F}_i(t)=\exp(-\lambda_i\,t)$ with $\lambda_1=1/60$, $\lambda_2=1/50$, $\lambda_3=1/45$ and $\lambda_4=1/45$. From the radially symmetric property of the copula \eqref{CopFGMdim4} and Corollary \ref{Cor1}, we obtain that the corresponding regression curves are:}
\begin{align*}
\label{m1Ship}
\hspace{-0.1cm}\textcolor{blue}{m_1(x) =}& \textcolor{blue}{\frac{900}{19}\exp\Bigl(-\frac{19x}{450}\Bigr)+x-\frac{75\theta}{1403948}\exp\Bigl(-\frac{131x}{900}\Bigr)\Biggl(\Biggr.290472+\exp\Bigl(\frac{x}{60}\Bigr)\Biggl(\Biggr.-145236}\\
&\textcolor{blue}{-343824\exp\Bigl(\frac{x}{300}\Bigr)-701974\exp\Bigl(\frac{x}{180}\Bigr)+171912\exp\Bigl(\frac{x}{50}\Bigr)+350987\exp\Bigl(\frac{x}{45}\Bigr)} \nonumber\\
&\textcolor{blue}{+863968\exp\Bigl(\frac{23x}{900}\Bigr)+443352\exp\Bigl(\frac{x}{36}\Bigr)-431984\exp\Bigl(\frac{19x}{450}\Bigr)-221676\exp\Bigl(\frac{2x}{45}\Bigr)} \nonumber\\
&\textcolor{blue}{-580944\exp\Bigl(\frac{43x}{900}\Bigr)\Biggl. \Biggr) \Biggl. \Biggr)-\frac{450}{29}\bigl(1+\theta\bigr)\exp\Bigl(-\frac{29x}{450}\Bigr),}\nonumber
\end{align*}

%%%%%%%%%%%%%%%%%%%%%%%%%%%%%%%%%
\begin{align*}
\textcolor{blue}{m_2(x)=}&\textcolor{blue}{150-90 \exp\Bigl(-\frac{x}{45}\Bigr)-\frac{45}{2} \Bigl(1-\exp\Bigl(-\frac{2x}{45}\Bigr)\Bigr)-\frac{360}{7} \Bigl(1-\exp\Bigl(-\frac{7x}{180}\Bigr)\Bigr)}\\
&\textcolor{blue}{+\frac{2}{1463} \exp\Bigl(-\frac{32x}{225}\Bigr) \Biggl(\Biggr.5985
\Bigl(-2+\exp\Bigl(\frac{x}{50}\Bigr)\Bigr)\theta-6930 \exp\Bigl(\frac{x}{60}\Bigr) \Bigl(-2+\exp\Bigl(\frac{x}{50}\Bigr)\Bigr) \theta}\nonumber\\
&\textcolor{blue}{-14630 \exp\Bigl(\frac{x}{45}\Bigr) \Bigl(-2+\exp\Bigl(\frac{x}{50}\Bigr)\Bigr) \theta+17556
\exp\Bigl(\frac{7x}{180}\Bigr) \Bigl(-2+\exp\Bigl(\frac{x}{50}\Bigr)\Bigr) \theta}\nonumber\\
&\textcolor{blue}{+9405 \exp\Bigl(\frac{2x}{45}\Bigr) \Bigl(-2+\exp\Bigl(\frac{x}{50}\Bigr)\Bigr) \theta-11970 \exp\Bigl(\frac{11x}{180}\Bigr) \Bigl(-2 \theta+\exp\Bigl(\frac{x}{50}\Bigr)
\Bigl(1+\theta\Bigr)\Bigr)}\nonumber\\
&\textcolor{blue}{+2 \exp\Bigl(\frac{11x}{90}\Bigr) \Bigl(-584 \theta+\exp\Bigl(\frac{x}{50}\Bigr) \Bigl(5985+292 \theta\Bigr)\Bigr)\Biggl.\Biggr)}\nonumber
\end{align*}

%%%%%%%%%%%%%%%%%%%%%%%%%%%%%%%%%
\noindent \textcolor{blue}{and}

\begin{align*}
\hspace{-5.0cm}\textcolor{blue}{m_3(x) =}& \textcolor{blue}{110+\frac{450}{19} \exp\Bigl(-\frac{19x}{450}\Bigr)-50 \exp\Bigl(-\frac{x}{50}\Bigr)-\frac{300}{11}\Bigl (1-\exp\Bigl(-\frac{11x}{300}\Bigr)\Bigr)}\\
&\textcolor{blue}{-\frac{900}{53} \theta\exp\Bigl(-\frac{7x}{50}\Bigr) +\frac{1800}{91}\theta
\exp\Bigl(-\frac{37x}{300}\Bigr) +\frac{225}{11}\theta \exp\Bigl(-\frac{3x}{25}\Bigr) }\nonumber\\
&\textcolor{blue}{+\frac{67050}{2279}\theta \exp\Bigl(-\frac{53x}{450}\Bigr) -\frac{1800}{73}\theta \exp\Bigl(-\frac{31x}{300}\Bigr) -\frac{227700}{6461}\theta\exp\Bigl(-\frac{91x}{900}\Bigr)}\nonumber\\
&\textcolor{blue}{-\frac{13725}{374}\theta \exp\Bigl(-\frac{22x}{225}\Bigr) -\frac{450}{43} \theta\exp\Bigl(-\frac{43x}{450}\Bigr) +\frac{179100}{3869}\theta \exp\Bigl(-\frac{73x}{900}\Bigr)
}\nonumber\\
&\textcolor{blue}{+\frac{900}{71} \theta\exp\Bigl(-\frac{71x}{900}\Bigr) +\frac{225}{17} \theta\exp\Bigl(-\frac{17x}{225}\Bigr) -\frac{900}{53} \exp\Bigl(-\frac{53x}{900}\Bigr) \Bigl(1+\theta\Bigr).}\nonumber
\end{align*}

%%%%%% TABLA 3

\begin{table}
{\rowcolors{2}{gray!20!white!50}{gray!70!white!30}
\begin{center}
\begin{tabular}{ |p{1.5cm}|p{2cm}|p{2cm}|p{2cm}|p{2cm}| }
\hline
\hspace{0.5cm}$\theta$& \hspace{0.75cm}$\hat{R}_1^2$ & \hspace{0.75cm}$\hat{R}_2^2$ & \hspace{0.75cm}$\hat{R}_3^2$ &\hspace{0.75cm} $\hat{R}_4^2$\\
\hline
\hspace{0.5cm}0     & 0.8940325  & 0.02977645 & 0.005617499 & 0.005619555\\
\hspace{0.5cm}0.25  & 0.8933114  & 0.03030314 & 0.005534596 & 0.005527684\\
\hspace{0.5cm}0.5   & 0.8928696  & 0.03070126 & 0.005434534 & 0.005429036\\
\hspace{0.5cm}0.75  & 0.8928973  & 0.03102241 & 0.00532042  & 0.005313731\\
\hspace{0.5cm}1     & 0.8923792  & 0.03140672 & 0.005217848 & 0.005213205\\

\hline
\end{tabular}
\end{center}}
\caption{\textcolor{blue}{Approximations of $R_i^2$ for $i = 1,2,3,4$ in Example \ref{Example51}, depending on the dependence parameter $\theta$.}}
\label{Tab03}   
\end{table}

\textcolor{blue}{Note that $m_4(x) = m_3(x)$ for all $x\geq 0$. Figure \ref{fig_Cexmple01} (left) represents the plots of $m_i(x)$ for $i=1,2,3,4$ and $\theta = 1$. We could use these curves to estimate the expected lifetime of the system when the failure time of a component is known.}

\textcolor{blue}{To approximate the values of $R_i^2$ for $i=1,2,3,4$, we simulate 5000 lifetimes of each component with dependence parameter $\theta=1$ in \eqref{CopFGMdim4}. We calculate $\hat{R}_i^2$ for $i = 1,2,3,4$ from the simulated values of $m_i(X_i)$ and $T$. We repeat this procedure 1000 times and we represent the dispersion of the values $\hat{R}_i^2$ by the corresponding boxplots displayed in Figure \ref{fig_Cexmple01} (right). Table \ref{Tab03} represents some values of $\hat{R}_1^2$, $\hat{R}_2^2$, $\hat{R}_3^2$ and $\hat{R}_4^2$ for different values of the parameter $\theta$ in \eqref{CopFGMdim4}. As we can see in Table \ref{Tab03}, the component 1 is the most important of the system, followed by the second one, being the third and fourth components equally important. This ordering remains equal for all dependence parameters considered. In this example, the importance measure is robust against changes in the FGM copula given in \eqref{CopFGMdim4}.}

%%%%%% TABLA 4

\begin{table}
{\rowcolors{2}{gray!20!white!50}{gray!70!white!30}
\begin{center}
\begin{tabular}{ |p{0.5cm}|p{0.5cm}|p{0.5cm}|p{0.5cm}|p{1.5cm}|p{1.5cm}|p{1.5cm}|p{1.5cm}| }
\hline
$\beta_1$&$\beta_2$&$\beta_3$ & $\beta_4$& \hspace{0.5cm}$\hat{R}_1^2$ & \hspace{0.5cm}$\hat{R}_2^2$ & \hspace{0.5cm}$\hat{R}_3^2$ &\hspace{0.5cm} $\hat{R}_4^2$\\
\hline
 1.5   & 1.5 & 1.5 & 1.5 &  0.734772 & 0.0953583  & 0.0112894  & 0.0112894\\
 1.7   & 1.5 & 1.5 & 1.5 &  0.687818 & 0.0994187  & 0.0121124  & 0.0121124\\
 1.5   & 1.7 & 1.5 & 1.5 &  0.760626 & 0.0973614  & 0.0102809  & 0.0102809\\    
 1.5   & 1.5 & 1.7 & 1.5 &  0.752707 & 0.0902128  & 0.0110156  & 0.0110155\\    

\hline
\end{tabular}
\end{center}}
\caption{\textcolor{blue}{Approximated values of $R_i^2$ for $i = 1,2,3,4,$ in Example \ref{Example51},  depending on the shape parameters $\beta_i$ with $i = 1,2,3,4$.}}
\label{Tab04}   
\end{table}

\textcolor{blue}{\textbf{Case II:} set $\theta = 1$ in \eqref{CopFGMdim4} and  consider components with increasing failure rate functions. In particular, the lifetimes of the components are modelled by Weibull distributions. The components' reliability functions are $\bar{F}_i(t)=\exp(-(t/\lambda_i)^{\beta_i})$ for $i = 1,2,3,4$, with scale parameters $\lambda_1=\lambda_2=\lambda_3=\lambda_4=11$. The corresponding regression curves can be computed by numerical integration. Table \ref{Tab04} shows the approximated values of $R_1^2$, $R_2^2$, $R_3^2$ and $R_4^2$ for different shape parameters $\beta_i$ with $i=1,2,3,4.$ As in Case I, the ordering among the components remains, and the first component is the most important. However, in this case, the capacity to explain the system lifetime has been respectively reduced and increased for the first and the rest of the components.}\\
\end{Exa}

\section{Conclusions}

In this article, we have proposed an importance measure based on \textcolor{blue}{the variance decomposition formula.} Specifically, our measure can be classified as a time-independent lifetime importance measure. We have shown that the proposed measure considers the system's structure, the dependence structure among the components and their corresponding lifetimes, and that is useful for the case of systems with dependent or independent components, as well as, homogeneous or heterogeneous components. Several examples of closed-form have been provided along the paper. We have also established some conditions to compare \textcolor{blue}{the importance of} two different components, when the exact values of the measure are difficult to obtain. \textcolor{blue}{We provide a procedure based on Monte Carlo methods to approximate these measures. Finally, we include an example in the context of naval engineering to illustrate the applicability of the importance index.}

\section*{Acknowledgements} 
 \textcolor{blue}{The authors acknowledge the valuable comments and suggestions made by the anonymous referees and the Associated Editor.} AA, MAS and ASL acknowledge support received by Ministerio de Econom\'ia y Competitividad of Spain under grant PID2020-116216GB-I00,  by the 2014–2020 ERDF Operational Programme and by the Department of Economy, Knowledge, Business and University of the Regional Government of Andalusia, Spain under grant: FEDER-UCA18-107519. JN thanks the partial support of Ministerio de Ciencia e Innovaci\'on of Spain under grants PID2019-103971GB-I00/AEI/10.13039/501100011033. \textcolor{blue}{The authors AA and JN state that this manuscript is part of the project TED2021-129813A-I00 and they thank the support of MCIN/AEI/10.13039/501100011033 and the European Union NextGenerationEU/PRTR.}

\end{document}